\newcounter{thmcount}
\newtheorem{thm}[thmcount]{Theorem}
\newtheorem{lem}[thmcount]{Lemma}
\definecolor{nts}{rgb}{.8,.1,.8}
\definecolor{hilite}{rgb}{.1,0,.9}
\definecolor{clear}{rgb}{.8,.8,.8}
\definecolor{ozc}{rgb}{.8,.4,.2}
\definecolor{lhc}{rgb}{.4,.2,.6}
\definecolor{lsc}{rgb}{.2,.6,.4}
 \gdef\xxxmark{%
   \expandafter\ifx\csname @mpargs\endcsname\relax %
     \expandafter\ifx\csname @captype\endcsname\relax %
       \marginpar{xxx}%
     \else
       xxx %
     \fi
   \else
     xxx %
   \fi}
 \gdef\xxx{\@ifnextchar[\xxx@lab\xxx@nolab}
 \long\gdef\xxx@lab[#1]#2{{\bf [\xxxmark #2 ---{\sc #1}]}}
 \long\gdef\xxx@nolab#1{{\bf [\xxxmark #1]}}
 \long\gdef\xxx@lab[#1]#2{}\long\gdef\xxx@nolab#1{}%
 \gdef\turnoffxxx{\long\gdef\xxx@lab[##1]##2{}\long\gdef\xxx@nolab##1{}}%
\newcounter{oztix}
\newcounter{lhtix}
\newcounter{lstix}
\newcommand{\ens}[1]{\ensuremath{#1}}					%
\newcommand{\card}[1]{\ens{|#1|}}							%
\newcommand{\dotlist}[2]{\ens{#1,\ldots,#2}}
\newcommand{\bigoh}[1]{\ens{\mathcal{O}(#1)}}				%
\newcommand{\ith}{\ens{i^{\mbox{\hspace{.2mm}\scriptsize th}}}} 
\newcommand{\kth}{\ens{k^{ \mbox{\hspace{.2mm}\scriptsize th}}}}
\newcommand{\zth}{\ens{z^{ \mbox{\hspace{.2mm}\scriptsize th}}}} 
\newcommand{\setbuild}[2]{\ens{\{#1\ |\ #2\}}}				%
\newcommand{\mb}[1]{\mbox{#1}}
\newcommand{\anitem}{\ens{x}}
\newcommand{\valn}{\ens{n}}
\newcommand{\valk}{\ens{k}}
\newcommand{\valnmk}{\ens{\valn - \valk + 1}}
\newcommand{\maxthru}{\textsc{MaxThroughput}}
\newcommand{\mincost}{\textsc{MinCost}}
\newcommand{\kofn}{\valk-of-\valn}
\newcommand{\oneofn}{$1$-of-\valn}
\newcommand{\strategy}{\ens{T}}
\newcommand{\flowtype}{\ens{\tau}}
\newcommand{\thruput}{\ens{F}}
\newcommand{\kfunc}{\ens{f}}                                                      %
\newcommand{\stratspace}{\ens{\mathcal{\strategy}}}		%
\newcommand{\stratspacec}{\ens{\mathcal{\strategy}_c}}	%
\newcommand{\perm}{\ens{\pi}}										%
\newcommand{\loadratio}{\ens{M}}									%
\newcommand{\flowamt}{\ens{m}}									%
\newcommand{\routing}{\ens{R}}										%
\newcommand{\commonr}{\ens{r}}                                                                                                      %
\newcommand{\unsat}{\ens{L}}											%
\newcommand{\satsuff}{\ens{Q}}										%
\newcommand{\processor}{processor}
\newcommand{\Processor}{Processor}
\newcommand{\megaprocessor}{mega\processor}	
\newcommand{\smt}{\textsf{SMT}$(\valk)$ problem}				%
\newcommand{\cmt}{\textsf{CMT}$(\valk)$ problem} 		%
\newcommand{\retval}[1]{\ens{\anitem_{#1}}}					%
\newcommand{\pr}[1]{\ens{p_{#1}}}									%
\newcommand{\qr}[1]{\ens{q_{#1}}}									%
\newcommand{\op}[1]{\ens{O_{#1}}}									%
\newcommand{\megaop}[1]{\ens{E_{#1}}}
\newcommand{\test}[1]{\ens{#1}}										%
\newcommand{\cost}[1]{\ens{c_{#1}}}								%
\newcommand{\rate}[1]{\ens{r_{#1}}}								%
\newcommand{\resrate}[1]{\ens{r'_{#1}}}							%
\newcommand{\landprob}[2]{\ens{g(#1,#2)}}					%
\newcommand{\cstrat}[2]{\ens{\strategy^c_{#1}(#2)}}				%
\newcommand{\sstrat}[2]{\ens{\strategy^s_{#1}(#2)}}				%
\newcommand{\varz}[1]{\ens{z_{#1}}}								%
\newcommand{\vary}[1]{\ens{y_{#1}}}									%
\newcommand{\probgen}[1]{\ens{\mathbf{Pr}\left[#1\right]}}
\newcommand{\varprobgen}[1]{\ens{\mathbf{Pr}\left[#1\right]}}
\newcommand{\hitsum}[2]{\ens{X_{#1,#2}}}					%
\newcommand{\probeq}[3]{\probgen{\hitsum{#1}{#2}~=~#3}}
\newcommand{\probge}[3]{\probgen{\hitsum{#1}{#2}~\geq~#3}}
\newcommand{\varprobge}[3]{\varprobgen{\hitsum{#1}{#2}~\geq~#3}}
\newcommand{\flowin}[2]{\ens{f_{#1}(#2)}}
\newcommand{\xitot}{\ens{\mathcal C}}
\newcommand{\mergeop}[2]{\ens{E_{#2}^{(#1)}}}
\newcommand{\mopfirst}[2]{\ens{b{(#1,#2)}}}
\newcommand{\moplast}[2]{\ens{c{(#1,#2)}}}
\newcommand{\picost}[2]{\ens{\xitot_{#1,#2}}}
\newcommand{\minmega}[1]{\ens{E_{\min}^{(#1)}}}
\newcommand{\mindex}[1]{\ens{h(#1)}}
\newcommand{\themergeop}[1]{\ens{\mergeop{#1}{\mindex{#1}}}}
\newcommand{\themergeoplus}[1]{\ens{\mergeop{#1}{\mindex{#1}+1}}}
\newcommand{\opchargeind}[3]{\ens{\kappa_{#1}(#2,#3)}}
\newcommand{\opchargev}[2]{\ens{\kappa_{#1}(#2)}}
\newcommand{\opcharge}[1]{\ens{\kappa_{#1}}}
\newcommand{\indexsym}{\ens{i'}}
\newcommand{\chargenumv}[2]{\ens{\indexsym(j,#1,#2)}}
\newcommand{\chargenumvshort}[1]{\ens{\indexsym(#1)}}
\title{Max-Throughput for (Conservative) $k$-of-$n$ Testing}
\author{Lisa Hellerstein\thanks{Polytechnic Institute of NYU. This research is supported by the NSF Grant CCF-0917153. {\tt hstein@poly.edu}}
 \and 
\"Ozg\"ur \"Ozkan\thanks{Polytechnic Institute of NYU. This research supported by US Department of Education Grant P200A090157. {\tt ozgurozkan@gmail.com}}
\and
Linda Sellie\thanks{Polytechnic Institute of NYU. This research is supported by a CIFellows Project postdoc, sponsored by NSF and the CRA. {\tt sellie@mac.com}}}
\begin{document}
\maketitle

\begin{abstract}
We define a variant of \kofn{} testing
that we call {\em conservative} \kofn{} testing.
We present a polynomial-time, combinatorial algorithm for the
problem of maximizing throughput of
conservative \kofn{} testing, in
a parallel setting.  
This extends previous work of Kodialam and Condon et al., 
who presented combinatorial
algorithms for parallel pipelined filter ordering,
which is the special case where $\valk=1$ (or $\valk=\valn$)~\cite{conf/ipco/Kodialam01,conf/pods/CondonDHW06,journals/talg/CondonDHW09}.
We also consider the problem of maximizing throughput for 
{\em standard} \kofn{} testing, and show how to obtain a polynomial-time algorithm based on the ellipsoid method using previous techniques.
\end{abstract}

\section{Introduction}

In {\em standard} \kofn{} testing, there are \valn\ binary
tests, that can be applied to an ``item'' \anitem.  
We use \retval i to denote the value of the \ith\ test on \anitem, 
and treat \anitem{} as an
element of $\{0,1\}^{\valn}$.
With probability $\pr i$,
$\retval{i} =  1$,  and with probability $1-\pr i$,
$\retval{i} =  0$.
The tests are independent, and we are given
$\pr 1, \ldots, \pr{\valn}$.
We need to determine whether at least $\valk$ of the \valn\ tests on \anitem{}
have a value of 0, by applying the tests
sequentially to \anitem.
Once we have enough
information to determine whether this is the case, that is, {\em once
we have observed $\valk$ tests with value 0, or \valnmk\ tests with
value 1}, we do not need to perform further tests.\footnote{In an alternative definition of \kofn{} testing, the task is to determine whether at least $\valk$ of the \valn\ tests have a value of 1.  Symmetric results hold for this definition.}

We define {\em conservative} \kofn{} testing the same way,
except that we continue performing tests
until we have either observed $\valk$ tests with value 0,
or have performed all \valn\ tests.  In particular, we do not stop
testing when we have observed \valnmk\ tests with value 1.

There are many applications where \kofn{} testing problems
arise, including 
quality testing, medical diagnosis, and database query optimization.
In quality testing, an item \anitem{} manufactured by a factory is tested for defects.  If it has at least $\valk$ defects, it is discarded.  
In medical diagnosis, the item \anitem{} is a patient;
patients are diagnosed with a particular disease if they fail at least
$k$ out of $n$ special medical tests. 
A database query may ask 
for all tuples \anitem{} satisfying at least $\valk$ of \valn\ given predicates (typically $\valk=1$ or $\valk=\valn$).

For $\valk=1$, standard and conservative \kofn{} testing
are the same.
For $\valk>1$, the conservative variant is relevant
in a setting where, for items failing fewer than $\valk$ tests,
we need to know {\em which} tests they failed. 
For example, in quality testing, we may want to know which tests
were failed by items failing fewer than 
$\valk$ tests (i.e. those not discarded)
in order to repair the associated defects.

Our focus is on the \maxthru\ problem 
for \kofn{} testing. 
Here the objective is to maximize the 
throughput of a system for \kofn{} testing in a parallel setting where each
test is performed by a separate ``{\processor}''.
In this problem, in addition to the probabilities $\pr i$, there
is a {\em rate limit} $\rate i$ associated with the {\processor} that performs
test $\test i$, indicating that the
{\processor} can only perform tests on $\rate i$ items per unit time.

\maxthru\ problems are closely related to 
\mincost\ problems~\cite{conf/pods/LiuPRY08,DBLP:journals/talg/DeshpandeH12}.
In the \mincost\ problem for \kofn{} testing, 
in addition to the probabilities $\pr i$,
there is a cost $\cost i$ associated with performing the \ith\ test.
The goal is to find a testing strategy (i.e.
decision tree) that minimizes the expected cost of testing an individual
item.    There are polynomial-time algorithms for solving the
\mincost\ problem for standard \kofn{} testing~\cite{salloumphd,salloumbreuer,bendov81,journals/tc/ChangSF90}.

Kodialam was the first to study the \maxthru\ \kofn{} 
testing problem, for the special case where $\valk=1$~\cite{conf/ipco/Kodialam01}.
He gave a \bigoh{n^3\log n} algorithm for the problem.  The algorithm is
combinatorial, but its correctness proof relies on
polymatroid theory.  
Later, Condon et~al.~studied the problem, calling it 
``parallel pipelined filter ordering''. 
They gave two \bigoh{n^2} combinatorial algorithms, 
with
direct correctness proofs~\cite{journals/talg/CondonDHW09}.

\paragraph{Our Results.} In this paper, we extend the previous work by giving a polynomial-time
combinatorial algorithm for the \maxthru\ problem for
conservative \kofn{} testing.
Our algorithm can be implemented to run in time 
\bigoh{n^2}, matching the running time of the algorithms of
Condon et al.~for 1-of-n testing.
More specifically, the running time
is \bigoh{\valn(\log \valn + \valk) + o}, where $o$ varies depending on the
output representation used;
the algorithm can be modified to produce different output representations. We discuss output representations below. 

The \maxthru\ problem for
standard \kofn{} testing appears to be fundamentally different
from its conservative variant.
We leave as an open problem the task of developing
a polynomial time {\em combinatorial} algorithm for this problem.
We show that previous techniques can
be used to obtain a polynomial-time
algorithm based on the ellipsoid method. This approach
could also be used to yield an algorithm, based on the
ellipsoid method, for the conservative variant.

\subparagraph{Output Representation}
For the type of representation used by Condon et al.~in achieving their
\bigoh{n^2} bound, $o = \bigoh{n^2}$.  A more explicit representation
has size $o = \bigoh{n^3}$.
We also describe a new, more compact output representation for which
$o = \bigoh{\valn}$. 

In giving running times, we follow Condon et al.~and 
consider only the time taken by the algorithm 
to produce the output representation.
We note, however, that different output representations may
incur different post-processing costs when we want to use them
them to implement the routings.
For example, the compressed representation has $o = \bigoh{n}$, 
but it requires spending \bigoh{n} time in the worst case 
to extract any permutation of \megaprocessor s stored by the 
\megaprocessor\ representation. We can reduce this complexity to 
\bigoh{\log n} using persistent 
search trees~\cite{DBLP:journals/cacm/SarnakT86}.  
In contrast, the explicit $\bigoh{n^3}$ representation gives direct access
to the permutations.
In practice, the choice of the best output representation can vary
depending on the application and the setting.

For ease of presentation, in our pseudocode we use the \megaprocessor\ representation, which is also used by Condon et al.~\cite{journals/talg/CondonDHW09} in their Equalizing Algorithm.

\section{Related Work}

Deshpande and Hellerstein studied the 
\maxthru\ problem for $\valk=1$, when there are precedence constraints
between tests~\cite{DBLP:journals/talg/DeshpandeH12}.
They also showed a close relationship between
the exact \mincost\ and \maxthru\ problems for \kofn{} testing, when $k=1$.
Their results can be generalized to apply to testing of other functions.

Liu et al.~\cite{conf/pods/LiuPRY08} presented a generic, 
LP based method for converting an 
approximation algorithm for a \mincost\ problem, into an approximation
algorithm for a \maxthru\ problem. Their results are not applicable to
this paper, where we consider only exact algorithms.

Polynomial-time algorithms for the
\mincost\ problem for standard \kofn{} testing were given
by Salloum, Breuer, 
Ben-Dov, and Chang et al.~\cite{salloumphd,salloumbreuer,bendov81,journals/tc/ChangSF90,salloumfaster}.

The problem of how to best order a sequence of tests, in
a sequential setting, has been studied in many
different contexts, and in many different models.   
See for example~\cite{conf/pods/LiuPRY08} and~\cite{journals/talg/CondonDHW09}
for a discussion of related work on                                                                                        
the filter-ordering problem (i.e. the \mincost\ problem for $k=1$)
and its variants, and~\cite{unluyurt2004189} for a general survey 
of sequential testing of functions.

\section{Problem Definitions}

A {\em \kofn{} testing strategy} for tests $1, \ldots, \valn$ is a binary decision tree 
\strategy\ that computes the \kofn{} function, 
$\kfunc:\{0,1\}^n \rightarrow \{0,1\}$, where $\kfunc(x) = 1$ if and only if
\anitem{} contains fewer than $\valk$ 0's.
Each node of \strategy\ is labeled by a variable $\retval i$.
The left child of a node labeled with \retval{i} is associated
with $\retval i = 0$ (i.e., failing test $i$), and the right child 
with $\retval i = 1$ (i.e., passing test $i$). 
Each $\anitem \in \{0,1\}^{\valn}$ corresponds to a root-to-leaf path
in the usual way, and the label at the leaf is  
$\kfunc(x)$.

A \kofn{} testing strategy \strategy\ is {\em conservative} if, 
for each root-to-leaf path leading
to a leaf labeled 1, the path contains exactly \valn\ non-leaf nodes,
each labeled with a distinct variable $\retval{i}$.

Given a permutation $\perm$ of the $n$ tests, we define 
\cstrat{\valk}{\perm} to be the conservative strategy described by the following procedure:
{\it Perform the tests
in order of permutation $\perm$ until at least $\valk$ 0's have been observed,
or all tests have been performed, whichever
comes first.  Output $0$ in the first case, and $1$ in the second.}

Similarly, we define \sstrat{\valk}{\perm} to be the following
standard \kofn{} testing strategy:
{\it Perform the tests
in order of permutation $\perm$ until at least $\valk$ 0's have been observed,
or until $n-\valk+1$ 1's have been observed, whichever
comes first.  Output $0$ in the first case, and $1$ in the second.}

Each test $\test i$ has an associated probability $\pr i$, where $0 < \pr i < 1$.
Let $D_p$ denote the product distribution on $\{0,1\}^{\valn}$ 
defined by the $\pr i$'s; that is, if $x$ is drawn from $D_p$, then
$\forall i, \probgen{\retval i=1}=\pr i$ and the \retval{i} are independent.
We use $x \sim D_p$ to denote a random $x$ drawn from $D_p$.
In what follows, when we use an expression of 
the form $\probgen{\ldots}$  involving 
an item $\anitem$, we mean the probability with respect to $D_p$.

\subsection{The \mincost\ problem}
\label{sec:mincostdef}
In the \mincost\ problem for standard \kofn{} testing,
we are given $\valn$ probabilities $\pr i$ and costs $\cost i>0$, for $i \in \{1,\ldots, \valn\}$, associated with the tests.
The goal is to find a \kofn{} testing strategy \strategy\ that minimizes
the expected cost of applying \strategy\ to a random item $\anitem \sim D_p$.
The cost of applying a
testing strategy \strategy\ to an item
\anitem{} is the sum of the costs of the tests along the root-to-leaf 
path for \anitem{} in \strategy. 

In the \mincost\ problem for conservative \kofn{} testing, the goal is the same,
except that we are restricted to finding a {\em conservative} testing strategy.

For example, consider the \mincost\ $2$-of-$3$ problem with probabilities $\pr{1}=\pr{2}=1/2$, $\pr{3}=1/3$ and costs $\cost{1}=1$, $\cost{2}=\cost{3}=2$.
A standard testing strategy for this problem can be
described procedurally as follows: {\em Given item $\anitem$, begin by performing test $1$.  
If $\retval{1} = 1$, 
follow strategy $\sstrat{2}{\pi_1}$, where $\pi_1 = (2,3)$.
Else if $\retval{1}=0$, 
follow strategy $\sstrat{1}{\pi_2}$, where $\pi_2 = (3,2)$.}

Under the above strategy, which can be shown to be optimal,
evaluating $\anitem=(0,0,1)$ costs
$5$, and
evaluating 
$\anitem'=(1,1,0)$ costs $3$.
The expected cost of applying this strategy to a random item $\anitem \sim D_p$ is $3\frac{5}{6}$.

Because the \mincost\ testing strategy may be a 
tree of  size exponential in the
number of tests,  algorithms for the
\mincost\ problem may output a compact representation of the output strategy.
\paragraph{The Algorithm for the \mincost\ Problem.}

In the literature, versions of the \mincost\ problem 
for \oneofn\ testing are studied under a variety of different
names, including pipelined filter ordering, selection ordering, 
and satisficing search (cf.~\cite{journals/talg/CondonDHW09}).

The following is a well-known, simple
algorithm for solving the \mincost\ problem for standard \oneofn\ testing
(see e.g.~\cite{GAREY73}):
First, sort the
tests in increasing order of the ratio $\cost i/(1-\pr i)$.
Next, renumber the tests, so that
$\cost 1/(1-\pr 1) < \cost 2/(1-p_2) < \ldots < \cost{\valn}/(1-\pr{\valn})$.
Finally, output the sorted list $\perm = (1,\ldots, n)$ of tests, which
is a compact representation of the strategy \sstrat{1}{\perm} (which is the same as \cstrat{1}{\perm}).

The above algorithm can be applied to the \mincost\ problem
for conservative \kofn{} testing, simply by treating $\perm$ as
a compact representation of the conservative strategy \cstrat{\valk}{\perm}.
In fact, that strategy is optimal for conservative \kofn{} testing: it has
minimum expected cost among
all conservative strategies.   This follows immediately from a lemma of Boros et al.~\cite{journals/amai/BorosU99}\footnote{The lemma of Boros et al.~actually 
proves that the corresponding decision tree is {\em 0-optimal}.
A decision tree computing a function $f$ 
is 0-optimal if it minimizes the expected cost of
testing an random $x$, {\em given that $f(x) = 0$}.
In conservative \kofn{} testing, where $f$ is the
\kofn{} function, the cost of testing $x$ is the same for all $x$ such that $f(x) = 1$.
Thus the problem of finding a min-cost conservative strategy for \kofn{} testing
is essentially equivalent to the problem of finding a 0-optimal decision tree
computing the \kofn{} function.  The lemma of Boros et al.~also applies to a more
general class of functions $f$ that include the \kofn{} functions.
}.

\subsection{The \maxthru\ problem}

The \maxthru\ problem for \kofn{} testing
is a natural generalization of the 
\maxthru\ problem for
\oneofn\ testing,
first studied by Kodialam~\cite{conf/ipco/Kodialam01}. 
We give basic definitions and motivation here.
For further information about this problem, including
information relevant to its application in practical settings, 
see~\cite{conf/ipco/Kodialam01,conf/pods/CondonDHW06,journals/talg/CondonDHW09}.

In the \maxthru\ problem for \kofn{} testing, as in the \mincost\ problem, 
we are given the probabilities
$\pr 1, \ldots, \pr{\valn}$ associated with the tests.
Instead of costs $\cost i$ for the tests, we are given
{\em rate limits} $\rate i > 0$.
The \maxthru\ problem arises in the following context.
There is an (effectively infinite) stream of items \anitem{} that need to be tested.
Every item \anitem{} must be assigned a strategy \strategy\ that will
determine which tests are performed on it.
Different items may be assigned to different strategies.
Each test is performed by a separate ``{\processor}'', and the {\processor}s operate in parallel.
(Imagine a factory testing setting.)
Item \anitem{} is sent from {\processor} to {\processor} for testing, 
according to its strategy \strategy.
Each {\processor} can only test one item at a time.
We view the problem of assigning items to strategies as
a flow-routing problem.

{\Processor} $\op{i}$ performs test $\test i$.  It has rate limit (capacity) $\rate i$, indicating
that it can only
process $\rate i$ items \anitem{} per unit time. 

The goal is to determine how many items should be assigned to each strategy \strategy,
per unit time, in order to maximize 
the number of items that can be processed per unit time, the throughput of the system.
The solution must respect the rate limits of the {\processor}s, in that the
expected number of items that need to be tested by {\processor} $\op i$ per unit time must
not exceed $\rate i$.  
We assume that tests behave according to expectation: if \flowamt\ items
are tested by {\processor} $\op i$ per unit time, then $\flowamt\pr i$ of them will have the value 1,
and $\flowamt(1-\pr i)$ will have the value 0.

Let \stratspace\ denote the set of all \kofn{} testing strategies and
$\stratspacec$ denote the set of all conservative \kofn{} testing strategies. 
Formally, the \maxthru\ problem for standard \kofn{}
testing is defined by the linear program below.
The linear program defining the \maxthru\ problem for conservative \kofn{} testing is obtained 
by simply replacing the set of \kofn{} testing strategies \stratspace\ by the
set of conservative \kofn{} testing strategies $\stratspacec$.

We refer to a feasible assignment to the variables $\varz{\strategy}$ in the LP below as a {\em routing}.
We call constraints of type (1) {\em rate constraints}.
The value of $\thruput$ is the {\em throughput} of the routing.
 We define $\landprob{\strategy}{i}$ as the probability that test $\test i$ will be performed on an item \anitem{} that is tested using strategy \strategy, when $\anitem \sim D_p$.
For $i \in \{1,\ldots,n\}$, if 
$\sum_{\strategy \in \stratspace} \landprob{\strategy}{i}\varz{\strategy} = \rate i$, 
we say that the routing {\em saturates} {\processor} $\op i$.

We will refer to the \maxthru\ problems for standard and conservative \kofn{} testing as the ``\smt" and the ``\cmt", respectively. 

As a simple example, consider 
the following \cmt{} (equivalently, \smt) instance, 
where $k=1$ and $n=2$: $r_1 = 1$, $r_2 = 2$, $p_1 = 1/2$, $p_2 = 1/4$.
There are only two possible strategies, $T_1(\pi_1)$, where $\pi_1 = (1,2)$,
and $T_1(\pi_2)$, where $\pi_2 = (2,1)$.  
Since all flow assigned to $T_1(\pi_1)$ is tested by $\op{1}$, $\landprob{T_1(\pi_1)}{1}=1$; 
this flow continues on to $\op{2}$ only if it passes test 1, which happens with probability $\pr{1} = 1/2$,
so $\landprob{T_1(\pi_1)}{2}=1/2$.  
Similarly, $\landprob{T_1(\pi_2)}{2}=1$ while $\landprob{T_1(\pi_2)}{1}=1/4,$ since $\pr{2} = 1/4$.
Consider the routing that assigns
$\thruput_1=4/7$ units of flow to strategy $T_1(\pi_1)$, and $\thruput_2=12/7$ units 
to strategy $T_1(\pi_2)$.  Then the amount of flow reaching $\op{1}$ is
$4/7\cdot\landprob{T_1(\pi_1)}{1}+ 12/7\cdot\landprob{T_1(\pi_2)}{1}=1$, 
and the amount of flow reaching $\op{2}$ is
$4/7\cdot\landprob{T_1(\pi_1)}{2}+ 12/7\cdot\landprob{T_1(\pi_2)}{2}=2$.  Since $\rate{1} = 1$ and $\rate{2} = 2$,
 this routing saturates both {\processor}s.   By the results of Condon et al.~\cite{journals/talg/CondonDHW09}, it is optimal.

\vspace{16pt}
\hrule
\vspace{6pt}

{\bf \noindent \maxthru\ LP:}

Given $\rate 1, \ldots, \rate{\valn} > 0$ and $\pr 1 \ldots, \pr{\valn} \in (0,1)$,
find an assignment to the variables $\varz{\strategy}$, for all $\strategy \in \stratspace$,
that maximizes
\[
\thruput = \sum_{\strategy \in \stratspace} \varz{\strategy}
\] 
subject to the constraints:\\[3pt] 
\mbox{\ \ \ \ }(1) $\sum_{\strategy \in \stratspace} \landprob{\strategy}{i}\varz{\strategy} \leq \rate i \mbox{ for all }i \in \{1, \ldots, \valn\}$
\mbox{ and }\\[3pt]
\mbox{\ \ \ \ }(2) $\varz{\strategy} \geq 0 \mbox{ for all }\strategy \in \stratspace$\\[3pt]
where 
$\landprob{\strategy}{i}$ denotes the probability that  test $\test i$ will be performed on an item \anitem{} that is tested using strategy \strategy, when $\anitem \sim D_p$.

\vspace{6pt}
\hrule
\vspace{10pt}

\section{The Algorithm for the \cmt}

We begin with some useful lemmas.
The algorithms of Condon et al.~\cite{journals/talg/CondonDHW09} for maximizing throughput
of \oneofn\ testing rely crucially on the fact that saturation
of all {\processor}s implies optimality.  
We show that the same holds for conservative \kofn{} testing.

\begin{lem}
Let \routing\ be a routing for an instance of the \cmt.  If \routing\ saturates all {\processor}s, then it is optimal.
\end{lem}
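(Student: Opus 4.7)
The plan is to use LP duality: I exhibit a feasible solution to the dual of the \maxthru\ LP (with \stratspace\ replaced by \stratspacec) whose objective value equals the throughput \thruput\ of \routing, so that weak duality certifies \routing\ as optimal. The dual LP has variables $\vary{i} \geq 0$ for each {\processor} and one constraint $\sum_{i=1}^{\valn} \landprob{\strategy}{i}\, \vary{i} \geq 1$ for each $\strategy \in \stratspacec$, with objective $\min \sum_i \rate{i}\, \vary{i}$.

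The heart of the argument is a conservation identity: for every $\strategy \in \stratspacec$,
\[
\sum_{i=1}^{\valn} (1-\pr{i})\, \landprob{\strategy}{i} \;=\; C,
\]
where $C$ depends only on $\pr{1},\ldots,\pr{\valn}$, \valk, and \valn, not on \strategy. Indeed, by independence of the test outcomes, $\landprob{\strategy}{i}(1-\pr{i})$ is exactly the probability that test $\test{i}$ is performed \emph{and} returns $0$ on $\anitem \sim D_p$, so the left-hand side is the expected number of zero-valued test results observed while running \strategy\ on \anitem. Because \strategy\ is conservative, it halts exactly when the \valk-th zero is observed or after all \valn\ tests have been performed; in either case the number of zeros observed equals $\min(\valk, Z(\anitem))$, where $Z(\anitem)$ is the number of zeros in \anitem. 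Hence $C = \mathbf{E}_{\anitem \sim D_p}[\min(\valk, Z(\anitem))]$, a quantity independent of \strategy.

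Given the identity, set $\vary{i}^{*} := (1-\pr{i})/C$ for each $i$. Then $\vary{i}^{*} \geq 0$ and $\sum_i \landprob{\strategy}{i}\, \vary{i}^{*} = 1$ for every $\strategy \in \stratspacec$, so this assignment is dual-feasible. Using that \routing\ saturates every {\processor}, i.e.\ $\sum_{\strategy} \landprob{\strategy}{i}\, \varz{\strategy} = \rate{i}$ for all $i$, we compute
\[
\sum_i \rate{i}\, \vary{i}^{*} \;=\; \sum_i \vary{i}^{*} \sum_{\strategy} \landprob{\strategy}{i}\, \varz{\strategy} \;=\; \sum_{\strategy} \varz{\strategy} \sum_i \landprob{\strategy}{i}\, \vary{i}^{*} \;=\; \sum_{\strategy} \varz{\strategy} \;=\; \thruput,
\]
and weak duality then gives $\thruput' \leq \sum_i \rate{i}\, \vary{i}^{*} = \thruput$ for any feasible routing, proving optimality of \routing. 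The main obstacle is the conservation identity; once it is in hand, the rest is a short duality computation. Conservativeness is essential here, since a standard strategy may halt after seeing $\valn - \valk + 1$ ones before observing all zeros in \anitem, so $\sum_i (1-\pr{i})\, \landprob{\strategy}{i}$ generally depends on \strategy\ in the standard setting and no uniform dual witness of this form exists.
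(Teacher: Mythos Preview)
Your proof is correct, and its core insight coincides with the paper's: the conservation identity $\sum_i (1-\pr{i})\,\landprob{\strategy}{i} = C$ with $C = \mathbf{E}[\min(\valk, Z(\anitem))]$ is precisely the paper's observation that the total number of zero-valued tests generated per unit time equals $\thruput \cdot \loadratio$, where $\loadratio = \valk\cdot\probgen{Z \geq \valk} + \sum_{j<\valk} j\cdot\probgen{Z=j}$ (your $C$ and the paper's $\loadratio$ are the same quantity). The difference is packaging: the paper argues directly that $\thruput\cdot\loadratio \leq \sum_i \rate{i}(1-\pr{i})$ because each {\processor} contributes at most $\rate{i}(1-\pr{i})$ zeros, with equality under saturation; you instead encode the identity as a dual feasible point $\vary{i}^{*} = (1-\pr{i})/C$ and invoke weak duality. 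Your route makes the LP connection explicit and is perhaps cleaner if one is already thinking in LP terms, while the paper's counting argument is more elementary and self-contained, requiring no duality machinery. Both arrive at the same throughput bound $\sum_i \rate{i}(1-\pr{i})/\loadratio$.
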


\begin{proof}
Each {\processor} $\op i$ can test at most $\rate i$ items per unit time.
Thus at {\processor} $\op i$, there are at most $\rate i (1-\pr i)$ tests 
performed that have the value 0.
Let $\kfunc$ denote the \kofn{} function.

Suppose \routing\ is a routing achieving throughput $\thruput$.
Since $\thruput$ items enter the system per unit time, 
$\thruput$ items must also leave the system per unit time.
An item \anitem{} such that $\kfunc(x) = 0$ does not leave the system until it fails $\valk$ tests.
An item \anitem{} such that $\kfunc(x) = 1$ does not leave the system until it has had all
tests performed on it.
Thus, per unit time,
in the entire system,
the number of tests performed that have the
value 0 must be $\thruput \cdot \loadratio$, where $\loadratio = (k\cdot \probgen{\mbox{\anitem{} has at least $\valk$ 0's}} + \sum_{j = 0}^{k-1}j\cdot \probgen{\mbox{\anitem{} has exactly $j$ 0's}})$. 

Since at most $\rate i (1-\pr i)$ tests 
with the value 0 can occur per unit time at {\processor} $\op i$,
$\thruput \cdot \loadratio \leq \sum_{i=1}^n \rate i (1-\pr i)$.
Solving for $\thruput$, this gives an upper bound of
$\thruput \leq \sum_{i=1}^n \rate i (1-\pr i)/ \loadratio$
on the maximum throughput.  This bound is tight if
all {\processor}s are saturated, and hence a routing saturating
all {\processor}s achieves the maximum throughput.
\end{proof}

In the above proof, we rely on the fact that every routing with throughput $\thruput$
results in the same number of 0 test values being generated in the system per unit time.
Note that this is not the case for {\em standard} testing, where the number of 0 test
values generated can depend on the routing itself,
and not just on the throughput of that routing.
We now give a simple counterexample showing that, in fact, saturation does not  imply optimality for the \smt.
Consider the \maxthru\ $2$-of-$3$ testing instance where
$\pr{1}=1/2,\pr{2}=1/4,\pr{3}=3/4$, and $\rate{1}=2,\rate{2}=1\frac{3}{4},\rate{3}=1\frac{3}{4}$.  

The following is a $2$-of-$3$ testing strategy: 
{\em Given item $\anitem$, peform test 1. If $\retval{1}=1$, 
follow strategy $\sstrat{1}{\pi_1}$, where $\pi_1 = (2,3)$.
Else if $\retval{1}=0$, 
follow strategy $\sstrat{1}{\pi_2}$, where $\pi_1 = (3,2)$.}

Assigning 2 units of flow to this strategy saturates the {\processor}s: 
$\op{1}$ is saturated since it receives the $2$ units entering the system, $\op{2}$ is saturated since it receives 
$1=2\cdot\pr{1}$ units from $\op{1}$ and $3/4=2\cdot\pr{3}\cdot(1-\pr{1})$ items from $\op{1},\op{2}$.  
Similarly, $\op{3}$ is saturated since it receives $1=2\cdot(1-\pr{1})$ units from $\op{1}$ and 
$3/4=2\cdot(1-\pr{3})\cdot\pr{1}$ units from $\op{1}\op{3}$.

We show that the routing is not optimal by giving a different routing with higher throughput.
The routing uses two strategies.
The first is as follows: {\em Given item $\anitem$, perform test 1.
If $\retval{1}=1$, follow strategy $\sstrat{2}{\perm_1}$, where $\perm_1 = (3,2)$.
Else, if $\retval{1}=0$ 
follow strategy $\sstrat{1}{\perm_1}$, where $\perm_2 = (2,3)$.}
The second strategy used by
the routing is $\sstrat{2}{\perm_3}$, where
$\perm_3 = (3,2,1)$.
Assigning $\thruput=1\frac{1}{2}$ units to the first strategy uses
$1\frac{1}{2}$ units of the capacity of $\op{1}$,
$15/16=1\frac{1}{2}\cdot(1-\pr{1})+1\frac{1}{2}\cdot\pr{1}\cdot(1-\pr{3})$ units of the capacity of $\op{2}$, and $15/16=1\frac{1}{2}\cdot(1-\pr{1})+1\frac{1}{2}\cdot(1-\pr{1})\cdot\pr{2}$ of the capacity of $\op{3}$.
This leaves $\op{2}$ and $\op{3}$ with residual capacity more than
$3/4< 1+3/4-15/16$, and $\op{1}$ with residual capacity $1/2=2-1\frac{1}{2}$.
We can then assign $3/4$ additional units to the second strategy
without violating any of the rate constraints, for a routing with total throughput $2\frac{1}{4}$.
(The resulting routing is not optimal, but illustrates our point.)

The routing produced by our algorithm for the \cmt{}
uses only strategies of the form \cstrat{\valk}{\perm}, for some permutation $\perm$ of the tests
(in terms of the LP, this means $\varz{\strategy} > 0$ only if $\strategy = \cstrat{k}{\perm}$ for some $\perm$).
We call such a routing a {\em permutation routing}. We say that it has a {\em saturated suffix}
if for some subset $\satsuff$ of the {\processor}s (1)
\routing\ saturates all {\processor}s in $\satsuff$, and 
(2) for every strategy $\cstrat{\valk}{\perm}$ used by \routing, the {\processor}s in $\satsuff$ (in some order) must form a suffix of $\perm$.

With this definition, and the above lemma, we are now able to generalize a key lemma of Condon et al.~to apply to conservative \kofn{} testing.  The proof is essentially the same as theirs; we present it 
below for completeness.

\begin{lem}
\label{saturatedsuffix}
(\mbox{Saturated Suffix Lemma})  Let \routing\ be a permutation routing for an instance of the \cmt.
If \routing\ has a saturated suffix, then \routing\ is optimal.
\end{lem}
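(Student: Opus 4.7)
My plan is to mimic the proof of the preceding lemma, but to count zero-valued outcomes only at processors in $\satsuff$ rather than across the whole system. Set $\unsat := \{1,\ldots,\valn\}\setminus\satsuff$, and for each item $\anitem \in \{0,1\}^{\valn}$ let $j_\unsat(\anitem)$ and $j_\satsuff(\anitem)$ denote the number of its zero coordinates in $\unsat$ and $\satsuff$ respectively. Let $\alpha(\anitem)$ denote the number of zero-outcomes that occur at $\satsuff$-processors when $\anitem$ is run through any conservative strategy that tests all of $\unsat$ before any of $\satsuff$; a short case analysis shows that $\alpha(\anitem)$ depends only on $\anitem$ (not on the orderings chosen within $\unsat$ or $\satsuff$), equals $0$ when $j_\unsat(\anitem)\geq\valk$, and equals $\min(\valk-j_\unsat(\anitem),\,j_\satsuff(\anitem))$ otherwise. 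Set $L := \mathbb{E}_{\anitem\sim D_p}[\alpha(\anitem)]$.

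Because every strategy used by $\routing$ places $\satsuff$ as a suffix, the total number of zero-tests per unit time occurring at $\satsuff$-processors under $\routing$ is exactly $\thruput \cdot L$; by saturation this same quantity also equals $\sum_{i\in\satsuff} \rate{i}(1-\pr{i})$, so $\thruput\cdot L = \sum_{i\in\satsuff}\rate{i}(1-\pr{i})$.

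The crux is a per-item minimality claim: for any conservative strategy $T'$ and any item $\anitem$, the number of zero-outcomes observed at $\satsuff$-processors when $T'$ is applied to $\anitem$ is at least $\alpha(\anitem)$. I would verify this in two cases. If $\kfunc(\anitem)=1$, then conservativeness forces $T'$ to perform all $\valn$ tests, contributing exactly $j_\satsuff(\anitem) = \alpha(\anitem)$. If $\kfunc(\anitem)=0$, then $T'$ halts after its $\valk$-th observed zero; letting $j'$ be the number of those zeros that lie in $\unsat$, we have $j'\leq j_\unsat(\anitem)$, so the $\satsuff$-contribution is $\valk-j'\geq\max(0,\valk-j_\unsat(\anitem))\geq\alpha(\anitem)$. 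Granting the claim, for any conservative routing $\routing'$ with throughput $\thruput'$ the total zero-tests per unit time at $\satsuff$-processors is at least $\thruput'\cdot L$, while the rate constraints bound that total above by $\sum_{i\in\satsuff}\rate{i}(1-\pr{i}) = \thruput\cdot L$. Dividing by $L$ gives $\thruput'\leq\thruput$, proving optimality. The main obstacle is exactly this per-item step: it uses conservativeness essentially, since in the standard setting a strategy may halt after $\valn-\valk+1$ observed ones and thus skip $\satsuff$-tests even when $\kfunc(\anitem)=1$, which is precisely why the analogous lemma fails for the \smt.
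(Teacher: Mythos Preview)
Your argument is correct, and it takes a genuinely different route from the paper's. The paper proves the lemma by \emph{reduction}: it drops the rate constraints on the unsaturated processors, argues that without loss of generality an optimal routing for the relaxed instance tests those processors first, and then invokes the previous lemma (saturation implies optimality) on the subsystem $\satsuff$ alone to conclude that $\routing$ is optimal for the relaxed---hence the original---instance. You instead give a direct counting argument that parallels the proof of the previous lemma but restricts attention to zero-outcomes at $\satsuff$-processors: the per-item lower bound $\alpha(\anitem)$ replaces the constant $\loadratio$, and your per-item minimality claim is the new ingredient that does the work.

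What each approach buys: the paper's reduction is short once the previous lemma is in hand and avoids the per-item case analysis. Your argument is more self-contained, yields an explicit formula for the optimal throughput in terms of $\satsuff$, and---as you note---isolates exactly where conservativeness enters (the case $\kfunc(\anitem)=1$ of the per-item claim). Two small points to tidy up: first, you should remark that your expectation is strictly positive (which follows since $0<\pr{i}<1$ and $\satsuff\neq\emptyset$), so the final division is valid; second, the symbol you chose for that expectation collides with the paper's macro $\unsat$, which expands to $L$---pick a different letter.
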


\begin{proof}
If \routing\ saturates all {\processor}s, then the previous lemma guarantees its optimality.
If not, let \unsat{} denote the set of {\processor}s not saturated by \routing.
Imagine that we removed the rate constraints for each {\processor} in \unsat.  
Let $\routing'$ be
an optimal routing for the resulting problem.
We may assume that on any input \anitem, $\routing'$ performs the tests in \unsat{} in some fixed arbitrary order
(until and unless $\valk$ tests with value 0 are obtained), 
prior to performing any tests in $\satsuff$.
This assumption is without loss of generality, because 
if not, we could modify $\routing'$ to first perform the tests in \unsat{}
without violating feasibility, since the {\processor}s in \unsat{} have no rate constraints,
and performing their tests first can only decrease the load on the other {\processor}s.
Thus the throughput attained by $\routing'$ is $T_R \cdot \frac{1}{p_L}$,
where $T_R$ denotes the maximum throughput achievable just with the {\processor}s
in $\satsuff$, and $p_L$ is the probability that a random \anitem{} will have the value 0
for fewer than $\valk$ of the tests in \unsat{} (i.e. it will not be eliminated by the
tests in \unsat).

Routing \routing\ also routes flow first through \unsat, and then through $\satsuff$. 
Since it saturates the {\processor}s in $\satsuff$, by the previous lemma, 
it achieves maximum possible throughput with those {\processor}s.
It follows that \routing\ achieves the same throughput as $\routing'$, and hence is optimal
for the modified instance where {\processor}s in \unsat{}
have no rate constraints.  Since removing constraints can only increase the maximum
possible throughput, it follows that \routing\ is also optimal for the original
instance.
\end{proof}

\subsection{The Equal Rates Case}

We begin by considering the \cmt{} in the special case where the rate limits $\rate i$ are equal to some constant
value \commonr\ for all {\processor}s.
Condon et al.~presented a closed-form solution for this case 
when $\valk = 1$~\cite{journals/talg/CondonDHW09}.  
The solution is a permutation routing that uses \valn\ strategies of the form $T_1(\perm)$.
Each permutation $\perm$ is one of the \valn\ 
left cyclic shifts of the permutation $(1, \ldots, n)$.
More specifically,
for $i \in \{1, \ldots, n\}$, let $\perm_i  = (i, i+1, \ldots, n, 1, 2, \ldots, i-1)$,
and let $T_i = \cstrat{1}{\perm_i}$.
The solution assigns $\commonr(1-\pr{i-1})/(1-\pr 1 \cdots \pr{\valn})$ units of flow to each $T_i$ (where $\pr{0}$ is defined to be $\pr{\valn}$).
By simple algebra,
Condon et al.
verified that the solution saturates all {\processor}s.
Hence it is optimal.

The solution of Condon et al.~is based on the fact that for the
\oneofn\ problem, assigning $(1-\pr{i-1})$ flow
to each $T_i$ equalizes the load on the {\processor}s.
Surprisingly, this same assignment equalizes the load for the \kofn{} problem as well.
Using this fact, we obtain a closed-form solution to the
\cmt.

\begin{lem}
\label{lem:equalrates}
Consider an instance of the {\cmt}. 
For $i \in \{1, \ldots, n\}$, let $T_i$ be as defined above.
Let $\hitsum ab = \sum_{\ell=a}^b (1-\retval \ell)$
and let $\alpha = \sum_{t=1}^k \probge 1\valn t$. 
Any routing that assigns a total of $t$
units of flow to
the strategies $T_i$, such that the fraction of the total that is assigned to each $T_i$ is
$(1 - \pr{i-1})/\sum_{j=1}^n (1 - \pr{j-1})$,
will cause each processor's residual capacity to be reduced by
$t \alpha/\sum_{j=1}^n (1 - \pr{j})$ units.
If all processors have the same rate limit $\commonr$, 
then the routing that assigns
$\commonr(1-\pr{i-1})/\alpha$ units of flow to strategy $T_i$
saturates all {\processor}s.
\end{lem}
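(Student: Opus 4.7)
The plan is to fix an arbitrary processor $\op{j}$, compute the total flow $L_j$ it receives under the proposed routing, and show that $L_j = t\alpha/\sum_{\ell}(1-\pr{\ell})$ regardless of $j$. Both parts of the lemma then fall out: the first is this identity, and the second follows by choosing $t$ so that each $L_j$ equals $\commonr$. The single analytic tool I would use is the elementary identity $E[\min(V,\valk)] = \sum_{t=1}^{\valk}\probgen{V \geq t}$, valid for any non-negative integer random variable $V$. Applied to $V = \hitsum{1}{\valn}$ (the total number of zero-valued tests on a random item), this is precisely the definition of $\alpha$.

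Writing $W = \sum_{\ell=1}^{\valn}(1-\pr{\ell})$ and $z_i = t(1-\pr{i-1})/W$, I would first unpack
\[
L_j = \sum_{i=1}^{\valn} z_i\,\landprob{T_i}{j} = \frac{t}{W}\sum_{i=1}^{\valn}(1-\pr{i-1})\,\landprob{T_i}{j}.
\]
Since $T_i$ performs tests in the cyclic order $i, i+1, \ldots$ and halts after $\valk$ zeros, it performs test $\test{j}$ iff the cyclic block of tests preceding $j$ in $\perm_i$ contains fewer than $\valk$ zeros; that is, $\landprob{T_i}{j} = \probgen{\hitsum{i}{j-1} < \valk}$, with indices read cyclically (and the block understood as empty when $i = j$).

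The crux of the argument is to show that the inner sum equals $\alpha$ independently of $j$. The key observation is that $1-\pr{i-1} = \probgen{\retval{i-1}=0}$ and that $\retval{i-1}$ is independent of the variables in the cyclic interval $[i,j-1]$, since $i-1$ never lies in that range. Independence lets me absorb the weight into the probability and rewrite each summand as the joint probability $\probgen{\retval{i-1}=0 \text{ and } \hitsum{i}{j-1} < \valk}$. Reindexing with $m = 0, 1, \ldots, \valn - 1$ via $i \equiv j - m \pmod{\valn}$, this is the probability that, on the backward cyclic walk starting at position $j-1$, the $(m+1)$-th step is a zero while the first $m$ steps contain fewer than $\valk$ zeros. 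Summing over $m$ and exchanging sum and expectation, the total counts exactly $\min(\hitsum{1}{\valn},\valk)$, because the backward walk of length $\valn$ visits every position once regardless of starting point $j$. Hence the inner sum equals $E[\min(\hitsum{1}{\valn},\valk)] = \alpha$, and $L_j = t\alpha/W$, which is the first claim. For the second claim I would set $t = \commonr W/\alpha$, so $z_i = \commonr(1-\pr{i-1})/\alpha$ and $L_j = \commonr$ at every processor, saturating all of them.

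The step I expect to be the main obstacle is the middle one: recognizing that after absorbing $1-\pr{i-1}$ via independence, the sum collapses to $E[\min(\hitsum{1}{\valn},\valk)]$, which is symmetric in $j$ precisely because the backward walk of length $\valn$ visits every test exactly once. The cyclic indexing (boundary case $i = j$, wraparound in $[i,j-1]$) needs some care, but once the reindexing is in place, everything reduces to the one identity between $\min$-expectations and tail sums.
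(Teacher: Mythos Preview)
Your proposal is correct and rests on the same stopping-time idea as the paper's proof, but the packaging differs enough to be worth a remark. The paper fixes the specific processor $\op{1}$, expands $\probgen{\hitsum{i}{\valn}<\valk}$ as $\sum_{v=0}^{\valk-1}\probeq{i}{\valn}{v}$, and argues separately that each of the $\valk$ resulting sums $\sum_i \qr{i-1}\probeq{i}{\valn}{v}$ equals the tail probability $\probge{1}{\valn}{v+1}$ via a ``observe $\retval{\valn},\retval{\valn-1},\ldots$ until $v{+}1$ zeros'' argument; it then appeals to symmetry of the final expression to conclude that every processor receives the same load. You instead keep $j$ general throughout, absorb the weight $1-\pr{i-1}$ into the probability via independence, and recognize the whole sum at once as $E[\min(\hitsum{1}{\valn},\valk)]$ by counting zeros along the backward cyclic walk; the tail-sum identity then delivers $\alpha$ in one stroke. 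Your route avoids both the split into $\valk$ separate sums and the symmetry appeal, and it handles general $\valk$ directly rather than doing $\valk=2$ and asserting the extension. The underlying combinatorics is the same; your presentation is a clean compression of it.
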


\begin{proof}
We begin by considering the routing in which $(1-\pr{i-1})$ units of flow
are assigned to each $T_i$.
Consider the question of how much flow arrives per unit time at \processor\ \op 1, under
this routing.
For simplicity, assume now that $\valk = 2$.
Thus as soon as an item has failed 2 tests, it is discarded.
Let  $\qr i = (1-\pr i)$.

Of the $\qr{\valn}$ units assigned to strategy $T_1$, all  $\qr{\valn}$ arrive at \processor\ \op 1.
Of the $\qr{n-1}$ units assigned to strategy $T_{n}$, all $\qr{n-1}$ arrive at \processor\ \op 1,
since they can fail either 0 or 1 test (namely test \valn) beforehand.

Of the $\qr{n-2}$ units assigned to strategy $T_{n-1}$, the number
reaching \processor\ \op 1 is $\qr{n-2}\beta_{n-1}$, where $\beta_{n-1}$ is the probability
that an item fails either 0 or 1 of tests $\valn-1$ and \valn.
Therefore, $\beta_{n-1} = 1- \qr{n-1}\qr{\valn}$.

More generally, for $i \in \{1, \ldots, n\}$, of the $\qr{i-1}$ units assigned to $T_i$,
the number reaching \processor\ \op 1 is $\qr{i-1}\beta_i$,
where $\beta_i$ is the probability that a random item fails a total of 0 or 1 of
tests $i, i+1, \ldots, n$.
Thus, $\beta_i = \probeq i\valn 0 +\probeq i\valn 1$.
It follows that the total flow arriving at \processor\ \op 1 is 
$\sum_{i=1}^{n} (\qr{i-1} \probeq i\valn 0)  + \sum_{i=1}^n (\qr{i-1}\probeq i\valn 1).$

Consider the second summation,
$\sum_{i=1}^n (\qr{i-1}\probeq i\valn 1)$.
We claim that this summation is equal to $\probge 1\valn 2$, which
is the probability that \anitem{} 
has {\em at least} two \retval{i}'s that are 0.
To see this,
consider a process where we observe the value of $\retval{\valn}$, 
then the value of $\retval{n-1}$ 
and so on down towards $\retval 1$, 
stopping if and when we have observed exactly two 0's.
The probability that we will stop at some point, having observed two 0's, is clearly
equal to the probability that \anitem{} has {\em at least} two \retval{i}'s that are set to 0.
The condition
$\sum_{j=i}^{n} (1 - \retval j) = 1$ is satisfied when
exactly 1 of $\retval{n}, \retval{n-1}, \ldots, \retval i$ has the value 0.
Thus $\qr{i-1}\probeq i\valn 1$
is the probability that we observe exactly one $0$ in 
$\retval{\valn}, \ldots, \retval i$, and then
we observe a second 0 at $\retval{i-1}$.  That is, it is the probability
that we stop after observing $\retval{i-1}$.
Since the second summation takes the sum of 
$\qr{i-1}\probeq i\valn 1$ over all $i$ between 1 and \valn,
the summation is precisely equal to 
the probability of stopping at some point in the above process, having
seen two 0's.  This proves the claim.

An analogous argument shows that the first summation,
$\sum_{i=1}^n (\qr{i-1}\probeq i \valn 0)$,
is equal to 
$\probge 1\valn 1$.

It follows that the amount of flow reaching \processor\ \op 1 is
$\probge 1\valn 1 + \probge 1\valn 2$. 
This expression is symmetric in the {\processor} numbers, so the
amount of flow reaching every $\op i$ is equal to this value.
Thus the above routing causes all {\processor}s to receive the same amount of flow.

Scaling each assignment in the above routing by a constant
factor scales the amount of flow reaching each
{\processor} by the same factor.
In the above routing, the fraction of total flow assigned to each
$T_i$ is 
$\qr{i-1}/\sum_{j=1}^n \qr{j}$, so
each unit of input flow sent along the $T_i$ 
results in each processor receiving
$(\probge 1\valn 1 + \probge 1\valn 2)
/\sum_{j=1}^n \qr{j}$ units.
Thus any routing that assigns a total of $t$
units of flow to the strategies $T_i$, such
that the fraction assigned to each $T_i$ is 
$\qr{i-1}/\sum_{j=1}^n \qr{j}$,
will cause each processor to receive 
$t(\probge 1\valn 1 + \probge 1\valn 2)
/\sum_{j=1}^n \qr{j}$ units.

Thus if
all {\processor}s
have the same rate limit \commonr,  
the routing that assigns $\commonr\qr{i-1}/(\probge 1\valn 1 +  \probge 1\valn 2)$
units to each strategy $T_i$ will saturate all {\processor}s. 

The above argument for $\valk = 2$ can easily be extended to arbitrary $\valk$.
The corresponding
proportional distribution of flow for arbitrary $\valk$ assigns
a $\qr{i-1}/\sum_{j=1}^n \qr{j}$ fraction of the total flow
to strategy $T_i$, and
each unit of input flow sent along the $T_i$ according to
these proportions results in $\alpha/\sum_{j=1}^n \qr{j}$ units
reaching each {\processor}.
The saturating routing
for arbitrary $\valk$, when all {\processor}s have rate limit \commonr, assigns
$r\qr{i-1}/\alpha$ units of flow
to strategy $T_i$.
 \end{proof}

\subsection{The Equalizing Algorithm of Condon et al.}

Our algorithm for the \cmt{} is an adaptation of
one of the two \maxthru\ algorithms, for the special case where $\valk=1$,
given by Condon et al.~\cite{journals/talg/CondonDHW09}.
We begin by reviewing
that algorithm, which we will call the {\em Equalizing Algorithm}.
Note that when $\valk=1$, it only makes sense to consider strategies
that are permutation routings, since an item can be discarded as
soon as it fails a single test.

Consider the \cmt{} for $\valk=1$.  View the problem
as one of constructing a flow of items through the
{\processor}s.  The capacity of each {\processor} is
its rate limit,
and the amount of flow sent along a permutation  $\perm$ (i.e., assigned to
strategy $\cstrat{1}{\perm}$)
is equal to the number of items sent along that path per unit time.
Sort the tests by their rate limits, and re-number them so that
$\rate{\valn} \geq \rate{n-1} \geq \ldots \geq \rate 1$.
Assume for the moment that all rate limits $\rate i$ are distinct.

The Equalizing Algorithm constructs a flow
incrementally as follows.   Imagine pushing 
flow along the single permutation 
$(\valn, \ldots, 1)$.
Suppose we continuously increase the amount of flow being pushed,
beginning from zero,
while monitoring the
``residual capacity'' of each
{\processor}, i.e., the difference between its rate
limit and the amount of flow it is already receiving.
(For the moment, do not worry about
exceeding the rate limit of a {\processor}.) 

Consider two adjacent {\processor}s, $i$ and $i-1$.
As we increase the amount of flow, the residual capacity of
each decreases continuously.  
Initially, at zero flow,
the residual capacity of $i$ is greater
than the residual capacity of $i-1$.  
It follows by continuity that the
residual capacity of $i$ cannot become less than
the residual capacity of $i-1$ without the two residual
capacities first becoming equal.
We now impose the following stopping condition: 
increase the flow sent along permutation $(\valn, \ldots, 1)$
until either
(1) some {\processor} becomes saturated,
or (2) the residual capacities of at least two of the {\processor}s
become equal.  
The second stopping condition ensures that when the flow increase is halted,
permutation $(\valn, \ldots, 1)$ still orders the {\processor}s in
decreasing order of their residual capacities.
(Algorithmically, we do not increase the
flow continuously, but instead directly calculate the amount of flow
which triggers the stopping condition.)

If stopping condition (1) above holds
when the flow increase is stopped, 
then the routing can be shown to have a saturated suffix, and hence
it is optimal.

If stopping condition (2) holds, we keep the current flow, and then
augment it by solving a new \maxthru\ problem in which
we set the rate limits of the {\processor}s to be equal to their residual
capacities under the current flow (their $\pr i$'s remain the same).

We solve the new \maxthru\ problem as follows.
We group the {\processor}s into equivalence
classes according to their rate limits.
We then replace each equivalence class with a single
{\megaprocessor}, with a rate limit equal to the
residual capacities of the constituent {\processor}s, and probability $\pr i$ equal
to the product of their probabilities.
We then essentially apply the procedure for
the case of distinct rate limits to the {\megaprocessor}s.
gen
The one twist is the way in which we
translate flow sent through a {\megaprocessor}
into flow sent through the constituent {\processor}s of
that {\megaprocessor};
we route the flow through
the constituent {\processor}s so as to equalize their load.
We accomplish this by dividing the flow
proportionally between the cyclic shifts of a permutation of the {\processor}s,
using the proportional allocation of Lemma~\ref{lem:equalrates}.
We thus ensure that the {\processor}s in each equivalence class
continue to have equal residual capacity.  
Note that, under this scheme, the residual capacity
of a {\processor} in a {\megaprocessor} may decrease more slowly
than it would if all flow were sent directly to that {\processor}
(because some flow may first be filtered through other {\processor}s
in the {\megaprocessor})
and this needs to be taken into account in determining when
the stopping condition is reached.

We illustrate the
Equalizing Algorithm on the following
\cmt{} where $k=1$ and $n=3$
(since $k=1$ this is also an \smt{}, where $k=1$ and $n=3$).
Suppose we have 3 {\processor}s, $\op{1},\op{2},\op{3}$ 
with rate limits $\rate 1 = 3, \rate 2 = 14$, and $\rate 3 = 18$, and
probabilities $\pr 1 = 1/8,\pr{2} = 1/2$ and $\pr{3} = 1/3$.
When flow is sent along  $\op{3},\op{2},\op{1}$, after 6 units of flow is sent we achieve a stopping condition with $\op{3}$ and $\op{2}$ having the same residual capacity of $12$; the residual capacity of $\op{1}$ is $2$.

Our algorithm then performs a recursive call where the {\processor}s $\op{3}$ and $\op{2}$ are combined into a {\megaprocessor} $\op{2,3}$ with 
associated probability
$\pr{2,3}=1/2\cdot 1/3=1/6$.  Within 
{\megaprocessor} $\op{2,3}$, flow will be routed by sending $3/7$ of it along
permuatation $\op{3},\op{2}$, and the remaining $4/7$ along
permutation $\op{2},\op{3}$; we observe that for one unit of flow sent through $\op{2,3}$ the amount of capacity used by each {\processor} is $3/7 + 2/7=5/7$.
Using this internal routing for {\megaprocessor} $\op{2,3}$,
the algorithm sends flow along $\op{2,3},\op{1}$; after 12 units of flow, we reach a stopping condition when $\op{1}$ is saturated.  Even though $\op{2}$ and $\op{3}$ are not saturated (they have $12-12\cdot 5/7$ residual capacity left) the flows constructed as described provide optimal throughput.

The Equalizing Algorithm, implemented in a straightforward way, 
outputs a representation of the resulting routing that consists
of a sequence of pairs of the form 
$((\dotlist{\megaop m}{\megaop 1}),\hat{t})$,
one for each
recursive call.    We call this a {\em {\megaprocessor} representation}.
The list 
$(\dotlist{\megaop m}{\megaop 1})$
represents the permutation
of \megaprocessor s $\megaop i$ along which flow is sent during that call.
Each $\megaop i$ is given by the subset of original processors contained in it,
and $\hat{t} > 0$ is a real number that
denotes the amount of flow to be sent along 
$(\dotlist{\megaop m}{\megaop 1})$.
Of course, flow coming into each {\megaprocessor} should be routed so as to
equalize the load on each of its constituent {\processor}s.
The size of this representation is $\bigoh{\valn^2}$.
Interpreted in a straightforward way, the representation corresponds
to a routing
that sends flow along an exponential number of different
permutations of the original {\processor}s.

Condon et al.~describe a combinatorial method to reduce the number of 
such permutations used to
be $\bigoh{n^2}$~\cite{journals/talg/CondonDHW09}.  After such a reduction,
the output can be represented explicitly as a set of $\bigoh{n^2}$ 
pairs of the form
$(\pi, t)$, one for each permutation $\pi$ that is used,
indicating that $t > 0$ amount of flow should be sent along permutation
$\pi$.   We call such a representation a {\em permutation  representation}.
The size of this permutataion representation, given explicitly, is $\bigoh{n^3}$. (Hellerstein and Deshpande
describe a linear algebraic method for reducing the number of permutations to be
at most $n$, yielding an explicit reprsentation of size $\bigoh{n^2}$, but at the
cost of higher time complexity\cite{DBLP:journals/talg/DeshpandeH12}.)

We also describe a variant of the \megaprocessor\ representation called the \textit{compressed representation}, where the algorithm outputs only the first permutation explicitly, and the outputs the sequence of merges, yielding a representation of size \bigoh n.

\subsection{An Equalizing Algorithm for the \cmt}

In this section, we prove the following Theorem by presenting an algorithm. We will give an outline of the algorithm as well as its pseudocode. We will then describe how to achieve the running time stated in the Theorem. 

\begin{thm}
\label{thm:alg}
There is a
combinatorial algorithm for solving the {\cmt} that can be implemented to run in time
$\bigoh{n(\log n + k) + o}$, where the value of $o$ depends on the output representation.
For the \megaprocessor\ representation, $o = \bigoh{n^2}$, for the permutation representation,
$o = \bigoh{n^3}$, and for the compressed representation, $o=\bigoh{n}$. 
\end{thm}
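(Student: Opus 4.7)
The plan is to generalize the Equalizing Algorithm of Condon et al.\ from the $\valk=1$ case to arbitrary $\valk$, using Lemma~\ref{lem:equalrates} as the engine for handling groups of \processor s with equal residual capacity, and to verify correctness via the Saturated Suffix Lemma (Lemma~\ref{saturatedsuffix}). I first sort the \processor s so that $\rate{\valn} \ge \cdots \ge \rate{1}$, then repeatedly ``push'' flow along a permutation of the current \megaprocessor s, in decreasing order of residual capacity, until either (1) some \processor\ saturates, or (2) two adjacent \megaprocessor s reach equal residual capacity; in the latter case I merge them into a single \megaprocessor\ whose internal flow is routed by the cyclic--shift allocation of Lemma~\ref{lem:equalrates} (scaled appropriately), guaranteeing that the constituent \processor s retain equal residual capacity. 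The process halts once condition (1) triggers at the ``top'' \megaprocessor; by construction the saturated \processor s form a suffix of every permutation used, and so Lemma~\ref{saturatedsuffix} delivers optimality.

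The main technical step is to compute, at each iteration, the amount of flow that triggers the next stopping event. For a single \processor\ of probability $\pr i$, pushing one unit of flow along permutation $(\valn,\ldots,1)$ reduces $\op i$'s residual capacity by a quantity determined by the $\pr j$'s of the \processor s appearing after it in the permutation, and similarly inside a merged \megaprocessor\ the per-unit drain on each constituent \processor\ is governed by the $\alpha/\sum_j \qr{j}$ factor of Lemma~\ref{lem:equalrates}. I therefore maintain, for each \megaprocessor\ currently in the list, its aggregate probability (product of the $\pr i$'s of its constituents), its residual capacity, and the scalar $\alpha$ from Lemma~\ref{lem:equalrates} describing its internal drain rate. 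When flow is pushed, for each \megaprocessor\ I solve a linear equation to find the flow at which its residual equals that of its right-neighbor (or at which any of its constituents saturate), and take the minimum. When two adjacent \megaprocessor s merge, I update the new aggregate probability multiplicatively, and recompute its $\alpha$.

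For the running time, sorting costs $\bigoh{\valn\log\valn}$. There are at most $\valn-1$ merges over the entire execution, and $\bigoh{\valn}$ push--and--stop iterations; each iteration only needs to examine the ``active'' \megaprocessor s whose stopping threshold might have changed, which amortizes to $\bigoh{\valn}$ work across the algorithm. The $\valn\valk$ term comes from the computation of the $\alpha$ values: $\alpha = \sum_{t=1}^{\valk}\probge 1\valn t$, which requires knowing the distribution of $\hitsum 1\valn$ truncated at $\valk$. I maintain, for each \megaprocessor, the first $\valk+1$ values $\probeq 1\valn 0, \ldots, \probeq 1\valn \valk$ of this distribution; updating these under a merge of two \megaprocessor s (a convolution truncated at $\valk$) costs $\bigoh{\valk}$ per merge, for a total of $\bigoh{\valn\valk}$ over the algorithm. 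Adding the cost $o$ of writing down the chosen output representation (\megaprocessor, permutation, or compressed) gives the claimed $\bigoh{\valn(\log\valn + \valk) + o}$ bound.

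The hardest part will be the correctness/invariant argument that, at every stage, within each current \megaprocessor\ the \processor s indeed have equal residual capacity and that the permutation being used still lists \megaprocessor s in non-increasing residual order, so that merging only ever happens between adjacent list items; this is what lets condition (1) imply that the saturated \processor s form a suffix of every permutation actually used. The rest is bookkeeping: verifying the truncated-convolution update rule for the $\probeq 1\valn t$ values under merges, and checking that the push-increment LP has a closed-form solution that can be evaluated in $\bigoh{1}$ once the per-\megaprocessor\ parameters are maintained. Given these, the correctness follows from Lemma~\ref{saturatedsuffix} applied to the final routing, and the running time follows from the amortization sketched above.
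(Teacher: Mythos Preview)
Your algorithmic outline and correctness argument match the paper's: push flow along the sorted \megaprocessor\ list, stop when a \processor\ saturates or two residuals equalize, merge, and invoke the Saturated Suffix Lemma at the end. That part is fine.

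The gap is in your running-time claim for the $\alpha$ (more precisely, the per-\megaprocessor\ drain rate $\xi(i)$) maintenance. You assert that ``updating these under a merge of two \megaprocessor s (a convolution truncated at $\valk$) costs $\bigoh{\valk}$ per merge, for a total of $\bigoh{\valn\valk}$.'' A truncated convolution of two length-$(\valk+1)$ vectors does \emph{not} cost $\bigoh{\valk}$; computing $\probeq{b}{c}{v}$ for the merged block as $\sum_j \probeq{b_1}{c_1}{j}\cdot\probeq{b_2}{c_2}{v-j}$ costs $\Theta(\min(v,s)+1)$ per output value $v$, where $s$ is the size of the smaller of the two merged \megaprocessor s. Summed over $v=0,\ldots,\valk$ this is $\Theta(\min(s,\valk)\cdot \valk)$, which can be $\Theta(\valk^2)$ for a single merge. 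So your per-merge bound is wrong, and the naive total is $\bigoh{\valn\valk^2}$.

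The paper recovers $\bigoh{\valn\valk}$ total, but only via a nontrivial amortization (its Lemma~\ref{lem:boundxi}): charge the convolution cost of each merge to the \processor s in the \emph{smaller} of the two \megaprocessor s, then observe that whenever a \processor\ is on the smaller side its containing \megaprocessor\ at least doubles in size, so the charges for each fixed $v$ form a geometric series summing to $\bigoh{1}$ per \processor, hence $\bigoh{\valk}$ per \processor\ over all $v$, hence $\bigoh{\valn\valk}$ in total. You need this smaller-half doubling argument (or something equivalent) to justify the $\valn\valk$ term; the ``$\bigoh{\valk}$ per merge'' shortcut does not hold.

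A smaller point: the $\bigoh{\valn}$ amortization you claim for finding the next stopping threshold is also underspecified. The paper gets $\bigoh{\log\valn}$ per iteration (hence $\bigoh{\valn\log\valn}$ total, absorbed into the sort) by storing the \megaprocessor s in a priority queue keyed by the quantities $Q_i=(R_i-R_{i-1})/(\xi(i)-\xi(i-1))$ and noting that after a merge all unaffected $Q_i$ shift by the same additive constant $\hat t$, so only the neighbors of the merged pair need explicit updating. Your ``only examine active \megaprocessor s'' sketch should be made into this concrete data-structure argument.
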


\paragraph{Algorithm Outline}
We extend the
Equalizing Algorithm of Condon et al., to apply to 
arbitrary values of $\valk$.
Again, we will push flow along the permutation of the {\processor}s
$(\valn, \ldots, 1)$ (where $\rate{\valn} \geq \rate{n-1} \geq \ldots \geq \rate 1$)
until one of the two stopping conditions is reached: 
(1) a {\processor} is saturated, or
(2) two {\processor}s
have equal residual capacity.
Here, however, we do not discard an item until it has failed $\valk$
tests, rather than discarding it as soon as it fails one test.
To reflect this, we divide the flow into $\valk$ different types, numbered
0 through $\valk-1$, depending
on how many tests its component items have failed.
Flow entering the system is all of type 0.

When \flowamt\ units of flow of type \flowtype{} enters a {\processor} \op i,
$\pr i\flowamt$ units pass test $i$, and $(1-\pr i)\flowamt$ units
fail it.  So, if $\flowtype < k-1$, then of the \flowamt\ incoming
units of type \flowtype, $(1-\pr i)\flowamt$ units 
will exit {\processor} \op i
as type $\flowtype+1$ flow, and $\pr i\flowamt$ will exit as type \flowtype{}
flow.  Both types will be passed on to the next {\processor} in
the permutation, if any.
If $\flowtype = k-1$, then $\pr i\flowamt$ units will exit as
type \flowtype{} flow and be passed on to
the next {\processor}, and the remaining $(1-\pr i)\flowamt$ will be discarded.

Algorithmically, we need to calculate the minimum amount of
flow that triggers a stopping condition.  This computation is
only slightly more complicated for general $\valk$ than it is for $\valk = 1$.
The key is to compute,
for each {\processor} \op i, what fraction of the flow that is pushed into
the permutation will actually reach {\processor} \op i (i.e. we need to
compute the quantity $g(\cstrat{k}{\perm},i)$ in the LP.)

If stopping condition (2) holds, we keep the current flow, and
augment it by solving a new \maxthru\ problem in which
we set the rate limits of the {\processor}s to be equal to their residual
capacities under the current flow (their $\pr i$'s remain the same).
To solve the new \maxthru\ problem, we 
again group the {\processor}s into equivalence
classes according to their rate limits, and
replace each equivalence class with a single
{\megaprocessor}, with a rate limit equal to the
rate limit of the constituent {\processor}s, and probability $\pr i$ equal
to the product of their probabilities.

We then want to apply the procedure for
the case of distinct rate limits to the {\megaprocessor}s.
To do this, we need to
translate flow sent into a {\megaprocessor}
into flow sent through the constituent {\processor}s of
that {\megaprocessor}, so as to equalize their load.
We do this translation separately for each type of flow
entering the {\megaprocessor}.  Note that flow of type \flowtype{} 
must be discarded as soon as it fails an additional $k-\flowtype$
tests. 
We therefore send flow of type \flowtype{} 
into the constituent {\processor}s of the {\megaprocessor}
according to the proportional allocation of
Lemma~\ref{lem:equalrates} for $(k-\flowtype)$-of-$\valn'$ 
testing, where $\valn'$ is the number of consituent {\processor}s
of the {\megaprocessor}.
We also need to compute how much flow of each type
ends up leaving the {\megaprocessor} (some of the incoming flow of
type \flowtype{} entering
the {\megaprocessor} may, for example, become outgoing flow of type $\flowtype+n'$),
and how much its residual capacity is reduced by the incoming flow.

We give a more detailed description of the necessary computations
in the pseudocode, which we discuss next.  However, the pseudocode 
does not contain all the implementation details, and is not 
optimized for efficiency.  It also gives the 
output using a megaprocessor representation.  Following presentation of the pseudocode,
we discuss how to implement it to
achieve the running times stated in Theorem~\ref{thm:alg} for the different
output representations.

\paragraph{Pseudocode}

The main part of the pseudocode is presented below as Algorithm~\ref{alg:maxthru}. 
The following information will be helpful
in understanding it.

At each stage of the algorithm, the {\processor}s are partitioned into
equivalence classes.  The {\processor}s in each equivalence class constitute
a {\megaprocessor}.  Each equivalence class consists of a contiguous subsequence
of {\processor}s in the sorted sequence $\op n, \ldots, \op 2, \op 1$.
We use $m$ to denote the number of {\megaprocessor}s (equivalence classes).
The {\processor}s in each equivalence class all have the same
residual capacity.
In Step 1 of the algorithm, we partition the {\processor}s into
equivalence classes according
to their rate limits;  two processors are in the same equivalence
class if and only if they have the same rate limit.
We use \megaop i to denote both the \ith\ equivalence class and the \ith\ \megaprocessor. 
In some our examples, we denote a \megaprocessor\ containing \processor s $\{\op i, \op{i+1}, \ldots, \op{j}\}$ by \op{i,i+1,\ldots,j}. 

In Step 2, we compute the amount of flow $\hat{t}$ that 
triggers one of the two stopping conditions.
In order to do this, we need to know the rate at which
the residual capacity of each {\processor} within an equivalence
class $\megaop i$
will be reduced when flow is sent 
down the {\megaprocessor}s in the order $\megaop m, \ldots, \megaop 1$.
We use $\xi(i)$ to denote the amount by which the residual capacity
of the {\processor}s in $\megaop i$ is reduced when
one unit of flow is sent in that order.

The equation for $\xi(i)$ follows from the preceding lemmas and discussion.
We use $f_j(z)$ to denote the amount of flow of type $j$
that would reach {\processor} $z$, if one unit of flow were
sent down the permutation $O_n, \ldots, O_1$, where these
are the original {\processor}s, not the {\megaprocessor}s. 
This is precisely equal to the probability that random item $x$ has exactly
 $j$ 0's in tests $n, \ldots, z+1$.
We compute the value of $f_j(z)$ for all $z$ and $j$ in
a separate initialization routine, given below.
The key here is noticing that if you send one unit of flow
down the {\megaprocessor}s $\megaop m, \ldots, \megaop 1$,
the amount of flow reaching {\megaprocessor} \megaop i is precisely
$f_j(c(i))$, where $c(i)$ is the highest index of a {\processor} in $\megaop i$;
the amount of flow reaching the {\megaprocessor} depends only on how
many 0's have been encountered in test $n, \ldots, c(i)+1$,
and not on the order used to perform those tests.

The quantity $\hat{t}_1$ is the amount of flow sent down
$\megaop m, \ldots, \megaop 1$ that would cause saturation of the {\processor}s in $\megaop 1$.
The quantity $\hat{t}_2$ is the minimum amount of flow sent down
$\megaop m, \ldots, \megaop 1$ that would cause the residual capacities of two
{\megaprocessor}s to equalize.  The stopping condition holds at the
minimum of these two quantities.

\renewcommand{\algorithmiccomment}[1]{// #1}

\renewcommand{\algorithmicrequire}{\textbf{Input:}}
\renewcommand{\algorithmicensure}{\textbf{Output:}}
\begin{algorithm*}
\caption*{\maxthru\ Initialization}
\label{maxthruinit}
\begin{algorithmic}
\STATE $\flowin{j}{z} \leftarrow 0,\ \forall z\in\{\dotlist{1}{\valn}\}, \forall j\in\{\dotlist{0}{k-1}\}$;
\STATE $\flowin{0}{1} \leftarrow 1$;
\FOR{$(z\leftarrow 2; z\leq \valn;z \leftarrow z+1)$}
	\FOR{$(j\leftarrow 0;j\leq k-1;j\leftarrow j+1)$}
		\STATE $\flowin {j}{z} \leftarrow \qr{z-1}\flowin{j-1}{z-1} + \pr{z-1}\flowin{j}{z-1}$;
	\ENDFOR
\ENDFOR
\RETURN SolveMaxThroughput(\dotlist{\pr 1}{\pr \valn},\dotlist{\rate 1}{\rate \valn});
\end{algorithmic}
\end{algorithm*}

\begin{algorithm}
\caption{SolveMaxThroughput(\dotlist{\pr 1}{\pr \valn},\dotlist{\rate 1}{\rate \valn})}
\label{alg:maxthru}
\begin{algorithmic}
\REQUIRE \valn\ selectivities \dotlist{\pr 1}{\pr \valn}; \valn\ rate limits $\rate 1 \leq \ldots \leq \rate \valn$
\ENSURE representation of solution to the \maxthru\ problem for the given input parameters
\STATE
\STATE \textbf{1.} \COMMENT{form the equivalence classes \dotlist{\megaop m}{\megaop 1}};
\STATE Let $1\leq \ell_{1}<\ldots<\ell_{m+1}= n+1$ such that, for all $y,y'\in[\ell_{i-1},\ell_{i})$ and $z,z'\in[\ell_{i},\ell_{i+1})$, where $i\in[2,n]$, \;
\STATE \hspace{3mm} we have $\rate y = \rate{y'}<\rate z = \rate{z'}$\;
\STATE Then, for $i\in[1,m]$, $\megaop i = \setbuild{\op z}{\ell_{i}\leq z < \ell_{i+1}}$, and $R_{i} \leftarrow \rate{\ell_{i}}$. 
\STATE
\STATE \textbf{2.} // calculate $\hat{t}$ using the following steps;
\FOR {$(i\leftarrow 1; i\leq m; i\leftarrow i+1)$}
\STATE $c(i) \leftarrow \mbox{highest index of a {\processor} in \megaop i}$;
\STATE $b(i) \leftarrow \mbox{lowest index of a {\processor} in \megaop i}$;
\STATE Recall that $\hitsum ab = \sum_{\ell=a}^b (1-\retval \ell)$\;
\STATE $\xi(i) \leftarrow  \sum_{j=0}^{k-1} \flowin{j}{c(i)}\cdot\left(	\sum_{v= 1}^{k-j} 	\varprobge{b(i)}{c(i)}{v}		\right)/\sum_{t =b(i)}^{c(i)}(1-\pr t)$;
\ENDFOR
\STATE $\hat{t}_1 \leftarrow \frac{R_1}{\xi(1)}$;
\STATE $\hat{t}_2 \leftarrow \min_{i\in [2,\ldots,m]} \left(\frac{R_i - R_{i-1}}{\xi(i) - \xi(i-1)}\right)$;
\STATE $\hat{t} \leftarrow \min(\hat{t}_1,\hat{t}_2)$;
\STATE
\STATE \textbf{3.} \COMMENT{calculate the residual capacity for each {\processor} \op \ell}; 
\FOR{$(\ell\leftarrow 1;\ell\leq \valn; \ell\leftarrow \ell+1)$}
\STATE $j\leftarrow \mbox{ index of the equivalence class \megaop j containing {\processor} \op \ell}$;
\STATE $\resrate \ell \leftarrow \rate \ell - \xi(j)\hat{t}$;
\ENDFOR
\STATE
\STATE \textbf{4.} \COMMENT{store new flow and recurse if needed}
\STATE $K\leftarrow((\dotlist{\megaop m}{\megaop 1}),\hat{t})$;
\IF[residual capacity of equivalence class \megaop 1 is 0]{$(\resrate 1 == 0)$}
\RETURN $K$;
\ELSE
\STATE $K' \leftarrow \mbox{ SolveMaxThroughput(\dotlist{\pr 1}{\pr \valn},\dotlist{\resrate 1}{\resrate \valn})}$;
\RETURN $K \circ K'$; \COMMENT{i.e.~the concatenation of $K$ and $K'$}
\ENDIF
\end{algorithmic}
\end{algorithm}

\paragraph{Example}
We illustrate our algorithm for the \cmt{} on the following example.
Let $k = 2$ and $n = 4$.
Suppose the probabilities are $\pr{1}=\pr{2}=\pr{3}=1/2,$ $\pr{4}=3/4$, and 
the rate limits are $\rate{1}=\rate{2}=12$, $\rate{3}=\rate{4}=10$.  

Our algorithm first combines {\processor}s with same rate limits
into {\megaprocessor}s; thus we combine $\op{1}$ and $\op{2}$ into {\megaprocessor} $\op{1,2}$ with rate limit 12.
It routes flow through this {\megaprocessor} 
by sending a $1/2$ fraction of the flow in the order $\op{1}, \op{2}$, 
and sending the other $1/2$ fraction in the order $\op{2},\op{1}$.  
Similarly, $\op{3}$ and $\op{4}$ have the same rate limit, so they are combined into a {\megaprocessor} 
$\op{3,4}$ with rate limit
10, where a $1/3$ fraction of the flow is sent along $\op{3},\op{4}$, and the
other $2/3$ fraction is
sent along $\op{4},\op{3}$. 

Our {\megaprocessor} $\op{1,2}$ has a higher rate limit than $\op{3,4}$, consequently our 
algorithm routes flow in the order $\op{1,2},\op{3,4}$.  We now show that the stopping condition is reached after sending 
$6$ units of flow along this route.

The $6$ units of flow decreased the capacity of {\processor}s
$\op{1}$, and $\op{2}$ in $\op{1,2}$ by 6, since $k=2$ and thus flow cannot be discarded before it has been subject
to at least two tests.  

We now calculate the reduction of
capacity in $\op{3}$ and $\op{4}$ caused by the $6$ units of flow sent through $\op{1,2},\op{3,4}$.
Flow leaving $\op{1,2}$ 
 has a $1/4$ probability of having failed both {\processor}s in 
$\op{1,2}$ and exiting the system; for flow that stays in the system 
to be tested by $\op{3,4}$, it has a $1/4$ chance of having passed the test of 
both {\processor}s; it has a $1/2$ chance of having passed the test of one {\processor} and having failed 
the test of the other {\processor}.   
Thus, of the 6 units of flow sent into $\op{1,2}$, $1/4\cdot6 = 3/2$ units are 
passed on to $\op{3,4}$ as type 0 flow,
and $1/2\cdot6 = 3$ units of flow are passed on to $\op{3,4}$ as type 1 flow.

Of the $3/2$ units of type 0 flow, 
entering $\op{3,4}$,
all of it must undergo both test 3 and test 4, since flow is not discarded
until it has failed two tests.  Thus that flow reduces the 
capacity of both $\op{3}$ and $\op{4}$ by $3/2$ units.

Of the 3 units of type 1 flow entering $\op{3,4}$, 
$1/3$ is tested first by $\op{3}$, and then by $\op{4}$ only if it passes test $3$ (which it does
with probability $1/2$).
The remaining $2/3$ is tested first by $\op{4}$, and then by $\op{3}$ only if it passes test $4$ (which
is does with probability $3/4$).
Thus of the 3 units of type 1 flow, $3\cdot(1/3+2/3\cdot3/4) = 5/2$ units reach $\op{3}$,
and $3\cdot(2/3+1/3\cdot1/2) = 5/2$ units reach $\op{4}$. 
Hence the $3+3/2$ total units of flow entering $\op{3,4}$ reduce
the capacities of both $\op{3}$ and $\op{4}$ by
$5/2 + 3/2 = 4$.

We have thus shown that the
6 units of flow sent first to $\op{1,2}$ and then to $\op{3,4}$, 
cause the residual capacities of $\op{1}$ and $\op{2}$
to be $12 - 6 = 6$, and the residual capacities of $\op{3}$ and $\op{4}$
to be $10 - 4 = 6$.  Thus the residual capacities of all {\processor}s equalize, as claimed.

At this point our algorithm constructs a new {\megaprocessor}, by combining the {\processor}s in
$\op{1,2}$ with the {\processor}s in $\op{3,4}$.    All the {\processor}s in the resulting
{\megaprocessor}, $\op{1,2,3,4}$, have
a residual capacity of $6$.  Using the proportional allocation of Lemma~\ref{lem:equalrates} to route flow sent into $\op{1,2,3,4}$, we assign
1/7 of the flow into $\op{1,2,3,4}$ to permutation
$\perm_1=\{1,2,3,4\}$, 2/7 to permutation
$\perm_2=\{2,3,4,1\}$, 2/7 to permutation
$\perm_3=\{3,4,1,2\}$, and 2/7 to permutation
$\perm_4=\{4,1,2,3\}$.
By sending a total of 7 units of flow through
$\op{1,2,3,4}$ according to
this allocation, we send 1, 2, 2, and 2 units respectively along
the four permutations, 
achieving the saturating routing
given in Lemma~\ref{lem:equalrates}.

Our final routing achieves a throughput of $6 + 7 = 13$ which is optimal.

\paragraph{Achieving the running time.}

Let us first consider the running time of the algorithm excluding the computation of $\xi(i)$ and the time it takes to construct the output representation $K$. 
It is easy to see that the algorithm makes at most $\valn -1$ recursive calls, because {\megaprocessor}s can only be merged a total of $\valn -1$ times. 
Excluding the computation of $\xi(i)$, the time spent in each recursive call is clearly $\bigoh{n}$. 
However, we can implement the algorithm so as to ensure this time is \bigoh{\log \valn}, as follows. 
First, the maintenance of the equivalence classes can be handled in \bigoh{1} time per merge by simply taking a union of the sets of adjacent {\processor}s in each {\megaprocessor}, instead of recomputing these sets from scratch. 

Second, we do not need to compute the residual capacity of each {\megaprocessor} at every recursive call.
In fact, for all {\megaprocessor}s except the first one, we
only need enough information about its residual capacity to allow us to
compute
$\hat{t}_2 \leftarrow \min_{i\in [2,\ldots,m]} \left(\frac{R_i - R_{i-1}}{\xi(i) - \xi(i-1)}\right)$.
This suggests that for each 
{\megaprocessor} $i$ where $i \geq 2$, we keep the quantity $Q_i$,
where $Q_i = \left(\frac{R_i - R_{i-1}}{\xi(i) - \xi(i-1)}\right)$ 
instead of $R_i$.  
The {\megaprocessor}s can be stored in a priority queue,
according to their $Q_i$ values.

Consider any $i$ where $\megaop i$ or $\megaop{i-1}$ are not involved in a merge. Then 
\begin{align*}
\frac{(R_{i} - \xi(i)\hat t) - (R_{i-1} - \xi(i-1)\hat t)}{\xi(i) - \xi(i-1)} &= 
\frac{R_{i} - R_{i-1}}{\xi(i) - \xi(i-1)} - \hat t. 
\end{align*}
Thus following the merge, $Q_i$ is decreased by the same amount $\hat t$ for
all such $i$.  Therefore, instead of updating the $Q_i$ for these $i$
in the priority queue,
we can keep their current values, and maintain the sum of the $\hat t$ values computed so far; this can be subtracted
from $Q_i$ if its updated value is needed.  
We do need
to remove the two merged {\megaprocessor}s from the priority queue,
insert the information about the resulting new {\megaprocessor},
and update the $Q_i$ values for {\megaprocessor}s $i$ such that $\megaop i$ 
or $\megaop {i-1}$ were involved in a merge. Note that we need to change the $Q_{i}$ values for such \megaprocessor s due to the change in the $\xi(\cdot)$ value of the newly formed \megaprocessor. 
The above operations can be performed in time
\bigoh{\log n} time per merge, using the priority queue.

Therefore, the running time of the algorithm excluding the computation of $\xi(i)$ is \bigoh{n\log n + o} where $o$ is the time required to construct the output. 
In the pseudocode, the output is computed using the \megaprocessor\ representation.
Since there are at most $n$
recursive calls, there are at most $n$ pairs
$((\dotlist{\megaop m}{\megaop 1}),\hat{t})$ in
the output, and therefore $o = \bigoh{n^{2}}$.

If one chose to convert this representation to a permutation representation,
using the combinatorial method of Condon et al.~\cite{journals/talg/CondonDHW09},
then the value of $o$ would be $\bigoh{n^3}$.

Consider instead the following
more compact output representation, which we call the compressed representation. 
Suppose the algorithm outputs the initial permutation, then 
outputs the sequence of merges performed, together with the $\hat{t}$ values
associated with the merges. In this case, we have $o=\bigoh n$.

We will next show that the computation of $\xi(i)$ throughout the algorithm can be performed in \bigoh{\valn\valk} total time. 

\subparagraph{Computing $\xi(i)$.}
Let \mergeop ik be the \kth\ {\megaprocessor} in the
recursive call associated with the \ith\ merge. 
Let \mopfirst ik be the lowest index of a {\processor} in \mergeop ik, and let \moplast ik be the highest index of a {\processor} in \mergeop ik. 
Let $\card{\mergeop ik}$ denote the {\em size} of that {\megaprocessor},
that is, the number of {\processor}s in it.
Thus $\card{\mergeop ik} = \moplast ik - \mopfirst ik + 1$.

Let \mindex i and $\mindex {i}+1$ be the indices of the {\megaprocessor}s merged by the \ith\ merge (i.e.~\themergeop i and \themergeoplus i are the {\megaprocessor}s merged by the \ith\ merge).

Observe that at iteration $i$ after {\megaprocessor} \themergeop i is merged with \themergeoplus i we only recompute $\xi(\mindex i)$. After the merge, we need to compute
\[
\xi(\mindex i) =  \sum_{j=0}^{k-1} \flowin{j}{\moplast i{\mindex i + 1}}\cdot\left(	\sum_{v= 1}^{k-j} \probge{	\mopfirst i{\mindex i}}{\moplast i{\mindex i + 1}}{v}		\right) / 
\sum_{t=\mopfirst i{\mindex i}}^{\moplast i{\mindex i+1}} (1-\pr t)
\]

Consider the denominator 
$\sum_{t=\mopfirst i{\mindex i}}^{\moplast i{\mindex i+1}} (1-\pr t)$
in the above expression.
It is the sum of the failure probabilities of all
{\processor}s contained in 
~\themergeop i and 
\themergeoplus i.
To enable this computation to be performed in constant time per recursive call,
we simply store, with each
{\megaprocessor}, the sum of the failure probabilities of
all {\processor}s in it.  In each recursive call, it only
takes constant time to update this information.
Recall that \flowin{j}{\moplast i{\mindex i+1}} for all $j\in\{0,\ldots,k-1\}$ are computed in the initialization procedure.
Let 
\[
D_{j} = \sum_{v= 1}^{k-j}\probge{\mopfirst i{\mindex i}}{\moplast i{\mindex i + 1}}v	.
\]
Given $D_{j}$ for $j\in\{0,\ldots,k-1\}$, we can compute $\xi(\mindex i)$ in \bigoh k time.
Observe that 
\[
D_{j} = D_{j+1} + \probge{\mopfirst i{\mindex i}}{\moplast i{\mindex i + 1}}{k-j}.
\]
Therefore, given \probge{\mopfirst i{\mindex i}}{\moplast i{\mindex i + 1}}v for $v\in\{1,\ldots,k\}$, we can compute  $\{D_{0},\ldots,D_{k-1}\}$ in \bigoh k time. Finally, observe that
\[
\probge{\mopfirst i{\mindex i}}{\moplast i{\mindex i + 1}}v = \probge{\mopfirst i{\mindex i}}{\moplast i{\mindex i + 1}}{v+1} + \probeq{\mopfirst i{\mindex i}}{\moplast i{\mindex i + 1}}v.
\]
Therefore, given $\probeq{\mopfirst i{\mindex i}}{\moplast i{\mindex i + 1}}v$ for $v\in\{1,\ldots,k\}$, we can compute $\probge{\mopfirst i{\mindex i}}{\moplast i{\mindex i + 1}}v$ for all $v\in\{1,\ldots,k\}$ in \bigoh k time. 
To enable these computations,
we store, with each
{\megaprocessor}, 
the values \probeq{b}{c}v for all $v\in\{1,\ldots,k\}$, 
where $b$ and $c$ are respectively the lowest and highest indices of
the {\processor}s contained in that {\megaprocessor}.
We will analyze below the total cost of keeping these values updated.

We denote by \xitot\ the total cost of computing $\xi(\cdot)$ throughout the algorithm, using the implementation described.
Since we will have to compute $\xi(\cdot)$ at most $n$ times throughout the algorithm, by the arguments above, \xitot\ is bounded by $\bigoh{nk}$ plus the cost of computing $\probeq{\mopfirst i{\mindex i}}{\moplast i{\mindex i + 1}}v$ for all $i\in\{1,\ldots,n-1\}$ and $v\in\{0,\ldots,k\}$. 
Let us denote the cost of computing $\probeq{\mopfirst i{\mindex i}}{\moplast i{\mindex i + 1}}v$ by \picost iv. Therefore, $\xitot = \bigoh{nk} + \sum_{i=1}^{n-1}\sum_{v=0}^{k}\picost iv$. We show that $\xitot = \bigoh{nk}$ by proving $\sum_{i=1}^{n-1}\sum_{v=0}^{k}\picost iv =\bigoh{nk}$. 

\begin{lem}
\label{lem:boundxi}
$\sum_{i=1}^{n-1}\sum_{v=0}^{k}\picost iv = \bigoh{nk}$.
\end{lem}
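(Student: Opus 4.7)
The plan is to use the convolution identity that arises from each merge. When we merge \themergeop i and \themergeoplus i into a new megaprocessor at step $i$, the independence of the tests gives
\[
\probeq{\mopfirst i{\mindex i}}{\moplast i{\mindex i+1}}{v}
= \sum_{u=0}^{v}\probeq{\mopfirst i{\mindex i}}{\moplast i{\mindex i}}{u}\cdot\probeq{\mopfirst i{\mindex i+1}}{\moplast i{\mindex i+1}}{v-u},
\]
where both factors on the right are values already stored on the two merged megaprocessors from the previous iteration. Since each nonzero term requires $u\leq\card{\themergeop i}$ and $v-u\leq\card{\themergeoplus i}$, I would bound $\picost iv$ by the number of nonzero terms, giving
\[
\picost iv = \bigoh{\min(v+1,\,\card{\themergeop i}+1,\,\card{\themergeoplus i}+1)}.
\]

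Next I would sum over $v$. Letting $\sigma_i = \min(\card{\themergeop i},\card{\themergeoplus i})$ and splitting the range $\{0,\ldots,k\}$ at $v=\sigma_i$, a direct calculation of $\sum_{v=0}^k \min(v+1,\sigma_i+1)$ yields $\sum_{v=0}^{k}\picost iv = \bigoh{\min(\sigma_i,k)\cdot k}$. Hence
\[
\sum_{i=1}^{n-1}\sum_{v=0}^{k}\picost iv = \bigoh{k\cdot\sum_{i=1}^{n-1}\min(\sigma_i,k)},
\]
and it suffices to prove $\sum_{i=1}^{n-1}\min(\sigma_i,k)=\bigoh{n}$.

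The hard part is establishing this final sum bound. A generic small-to-large charging (charge each unit of $\sigma_i$ to a processor on the smaller side) only gives $\sum_i\sigma_i=\bigoh{n\log n}$ for arbitrary binary merge trees, yielding the weaker bound $\bigoh{nk\log n}$. To tighten to $\bigoh{nk}$, I would either exploit specific structural properties of the equalization-based merge pattern of the \cmt{} algorithm --- where merges are forced by the smallest $Q_i=(R_i-R_{i-1})/(\xi(i)-\xi(i-1))$ --- to argue each processor lies on the ``smaller side'' of only $\bigoh{1}$ merges, or replace the convolution with an incremental absorption: extend the larger of the two merged megaprocessors one processor at a time using the recurrence $\probeq{b}{c}{v}=p_c\probeq{b}{c-1}{v}+q_c\probeq{b}{c-1}{v-1}$, which costs $\bigoh{k}$ per absorbed processor. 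A more careful amortization tied to the merge order would then bound the total number of absorptions by $\bigoh{n}$, yielding the claimed $\bigoh{nk}$ total.
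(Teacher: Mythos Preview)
Your reduction to the target $\sum_{i}\min(\sigma_i,k)=\bigoh{n}$ is where the argument breaks. That bound is simply false for the merge sequences this algorithm can produce. The merges are always between adjacent megaprocessors in the sorted order, so by choosing the rate limits appropriately one can force any binary merge tree on $\{1,\ldots,n\}$; in particular a balanced tree. For a balanced tree, at level $\ell$ there are $n/2^{\ell}$ merges each with $\sigma_i=2^{\ell-1}$, and a direct calculation gives $\sum_i\min(\sigma_i,k)=\Theta(n\log k)$, not $\bigoh{n}$. Your fix (a) does not help because nothing in the equalization dynamics forbids balanced trees, and fix (b) is the same small-to-large count in disguise: absorbing the smaller side one processor at a time costs $\sigma_i$ absorptions per merge, and $\sum_i\sigma_i=\Theta(n\log n)$ on a balanced tree. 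So via your intermediate bound you can reach $\bigoh{nk\log k}$ but not $\bigoh{nk}$.

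The loss happens when you sum over $v$ before charging. The paper keeps the analysis per pair $(j,v)$ and shows that each processor $\op j$ is charged $\bigoh{1}$ in total for each fixed $v$. The point you are missing is that a merge with total size less than $v$ has an empty convolution sum for that $v$, hence $\picost iv=0$ and contributes no charge. Consequently, the \emph{first} merge that charges $\op j$ for this $v$ already has total size at least $v$; at the \emph{second} such merge, $\op j$ sits inside the result of the first, so the smaller side has size at least $v$. From there sizes at least double, and the per-merge charge $(2v+2)/|\minmega\cdot|$ is $\bigoh{1}$ and decays geometrically. Summing gives $\opchargev jv=\bigoh{1}$, hence $\opcharge j=\bigoh{k}$ and the total is $\bigoh{nk}$. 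Collapsing the $v$-sum first throws away exactly the ``total size $\geq v$'' threshold that makes this work.
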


\begin{proof} Let $\minmega i = \themergeop i$ if $\card{\themergeop i} \leq \card{\themergeoplus i}$ and $\minmega i = \themergeoplus i$ otherwise. 
Recall that when we need to compute $\probeq{\mopfirst i{\mindex i}}{\moplast i{\mindex i + 1}}v$, we have already computed and stored $\probeq{\mopfirst i{\mindex i}}{\moplast i{\mindex i}}v$ and $\probeq{\mopfirst i{\mindex i + 1}}{\moplast i{\mindex i + 1}}v$ for all $v\in\{0,\ldots,k\}$. 

Since $\probeq{b}{c}v = 0$ for any $b,c$ if $v>c-b+1$, we can compute $\probeq{\mopfirst i{\mindex i}}{\moplast i{\mindex i + 1}}v$ using the following equality: 
\begin{equation}
\label{eq:picost}
\probeq{\mopfirst i{\mindex i}}{\moplast i{\mindex i + 1}\hspace{-2mm}}{\hspace{-1mm}v} = 
\hspace{-7mm}
\sum_{j=\max(0,v-\card{\themergeoplus i})}^{\min(v,\card{\themergeop i})} 
\hspace{-10mm}
\probeq{\mopfirst i{\mindex i}}{\moplast i{\mindex i}}j \cdot \probeq{\mopfirst i{\mindex i + 1}}{\moplast i{\mindex i + 1}\hspace{-2mm}}{\hspace{-1mm}v-j}
\end{equation}
Thus, we perform at most one multiplication and one addition for each term in Equation~\ref{eq:picost}, yielding
\begin{equation}
\label{eq:costbound}
\picost iv < 2\cdot\min(v+1,\card{\minmega i}+1).
\end{equation}
We can now bound $\sum_{i,v} \picost iv$ as follows. Each time two {\megaprocessor}s \themergeop i and \themergeoplus i merge, we charge the cost of computing \probeq{\mopfirst i{\mindex i}}{\moplast i{\mindex i + 1}}{v}, for all $v\in\{0,\ldots,k\}$ to the {\processor}s in the smaller of the two {\megaprocessor}s, distributing the cost evenly among the {\processor}s in the {\megaprocessor}. 
Thus, we charge $(\sum_{v=0}^{k}\picost iv)/\card{\minmega i}$ to each {\processor} $\op{i} \in \minmega i$.

Let \opchargeind jiv denote how much of the cost of computing 
$\probeq{\mopfirst i{\mindex i}}{\moplast i{\mindex i + 1}}{v}$ we charge to \op j during the \ith\ merge.
 Let \opchargev jv denote how much of the cost of computing 
$\probeq{\mopfirst i{\mindex i}}{\moplast i{\mindex i + 1}}{v}$ for all $i=\{1,\ldots,k-1\}$ we charge to {\processor} \op j.
  Let \opcharge j denote the total amount we charge to {\processor} \op j.
 In other words, 
\begin{equation}
\label{eq:jiv}
\opchargeind jiv = 
\left\{
\begin{array}{cl}
\picost iv/\card{\minmega i} & \mbox{if } \op j\in\minmega i,  \\
0  &    \mbox{otherwise.}
\end{array}
\right.
\end{equation}
\begin{equation}
\opchargev jv = \sum_{i=1}^{n-1} \opchargeind jiv
\end{equation}
\begin{equation}
\opcharge j = \sum_{v=0}^{k} \opchargev jv = \sum_{i=1}^{n-1}\sum_{v=0}^{k} \opchargeind jiv
\end{equation}
Then, we have $\sum_{i,v} \picost iv = \sum_{j=1}^{n} \opcharge j$.  We will bound $\sum_{i,v}\picost iv$ by proving an upper bound on \opcharge j.

Consider any {\processor} \op j. We will show that $\opchargev jv = \bigoh{1}$. 
Let $\chargenumvshort z = \chargenumv vz$ be the index of the merge in which
{\processor} \op j is charged for the cost of computing $\probeq{b}{c}{v}$ for any $b,c$ for the \zth{}
time. 
Formally, let 
\[
\chargenumvshort z = \chargenumv vz =
\left\{
\begin{array}{cl}
\ell  & \mbox{if } \exists \ell\in[1,n-1] \mbox{ s.t.~} \opchargeind j\ell{v}>0 \wedge \card{\{t\ |\ t<\ell, \opchargeind jtv>0\}} = z-1    \\
0  & \mbox{otherwise}     
\end{array}
\right.
\]

By Equation~\ref{eq:costbound}, $\picost iv < 2\card{\minmega i}+2$, which implies $\opchargeind jiv < 4$ by Equation~\ref{eq:jiv}.
Observe that by the definition of \minmega i and \chargenumvshort z, if $\chargenumvshort 2 > 0$, we have 
\begin{equation}
\label{eq:twiceisv}
\card{\minmega{\chargenumvshort 2}} \geq \card{\mergeop{\chargenumvshort 1}{\mindex{\chargenumvshort 1}}} + \card{\mergeop{\chargenumvshort 1}{\mindex{\chargenumvshort 1} + 1}} \geq v.
\end{equation}
Also by the definition of \minmega i, if $\chargenumvshort z > 0$, we have 
\begin{equation}
\label{eq:doubling}
\card{\minmega{\chargenumvshort z}} \geq 2\cdot \card{\minmega{\chargenumvshort {z-1}}}.
\end{equation} 
Combining all these facts, and letting $Z = \max(x: \chargenumvshort x > 0)$, we have
\begin{align*}
\opchargev jv 
&= \sum_{i=1}^{n-1} \opchargeind jiv && \mb{by definition}\\
&\leq 4+ \sum_{i=2}^{n-1} \opchargeind jiv && \opchargeind jiv < 4\\
&= 4 + \sum_{z=2}^{Z} \opchargeind j{\chargenumvshort z}v && \mb{by definition}\\
&= 4+ \sum_{z=2}^{Z} \frac{\picost {\chargenumvshort z}v}{\card{\minmega{\chargenumvshort z}}} && \mb{by Equation~\ref{eq:jiv}}\\
&< 4+ \sum_{z=2}^{Z} \frac{2v+2}{\card{\minmega{\chargenumvshort z}}} && \mb{by Equation~\ref{eq:costbound}}\\
&\leq 4+ \sum_{z=2}^{Z} \frac{2v+2}{2^{z-2}\cdot\card{\minmega{\chargenumvshort 2}}} && \mbox{by Equation~\ref{eq:doubling}}\\
&< 6+ \sum_{z=2}^{Z} \frac{2v}{2^{z-2}\cdot\card{\minmega{\chargenumvshort 2}}} && \card{\minmega{\chargenumvshort 2}}\geq 2\\
&\leq 6+ \sum_{z=2}^{Z} \frac{2}{2^{z-2}} && \mb{by Equation~\ref{eq:twiceisv}}\\
&< 6+ 4 \\
&= 10.
\end{align*}
Thus, we have $\opchargev jv = \bigoh{1}$. This yields $\opcharge j = \sum_{v=0}^{k} \opchargev jv = \bigoh{k}$.
Since $\sum_{j=1}^{n} \opcharge j \leq n\cdot \max_{i\in\{1,\ldots,n\}} \opcharge i$, we have, 
\begin{equation*}
\sum_{i=1}^{n-1}\sum_{v=0}^{k}\picost iv = \sum_{j=1}^{n} \opcharge j \leq n\cdot\bigoh{k} = \bigoh{nk}
\end{equation*}
\end{proof}
Therefore, by Lemma~\ref{lem:boundxi}, we have $\xitot = \bigoh{nk} + \sum_{i=1}^{n-1}\sum_{v=0}^{k}\picost iv = \bigoh{nk}.$
Thus, our algorithm runs in total time \bigoh{n(\log n+ k) + o}.

\section{An Ellipsoid-Based Algorithm for the \smt}
\label{sec:smt}

There is a simple and elegant algorithm that solves the
\mincost\ problem for standard \kofn{} testing, due to 
Salloum, Breuer, and
(independently) Ben-Dov~\cite{salloumphd,salloumbreuer,bendov81}.
It outputs a strategy compactly
represented by two permutations,
one ordering the {\processor}s in increasing order of
the ratio
$\cost i/(1-\pr i)$, and the other  
in increasing order of the ratio
$\cost i/\pr i$.  
Chang et al.~and Salloum and Breuer later gave 
modified versions of this algorithm that
output a less compact,
but more efficiently evaluatable
representation of the same strategy
~\cite{salloumfaster,journals/tc/ChangSF90}.

We now show
how to combine previous techniques 
to obtain a polynomial-time algorithm for the \smt{} based on the ellipsoid method.  The algorithm uses a technique of 
Despande and Hellerstein~\cite{DBLP:journals/talg/DeshpandeH12}.
They showed that, for \oneofn\ testing, an algorithm
solving the \mincost\ problem can be combined with the ellipsoid
method to yield an algorithm for the \maxthru\ problem.
In fact, 
as we see in the proof below, their approach is actually a
generic one, and can be applied to the testing of other functions.

The ellipsoid-based algorithm for \kofn{} testing makes use of the dual of the LP for the \cmt, which
is as follows:

\vspace{6pt}
\hrule
\vspace{6pt}
{\bf \noindent Dual of Max-Throughput LP:}
Given $\rate 1, \ldots, \rate{\valn} > 0$, $\pr 1 \ldots, \pr{\valn} \in (0,1)$,
find an assignment to the variables $\vary{i}$, for all $i \in \{1, \ldots, n\}$, 
minimizing
$$\thruput = \sum_{i=1}^{n} \rate i \vary{i}$$
subject to the constraints: \\[3pt]
\mbox{\ \ \ \ }(1) $\sum_{i=1}^{n} g(\perm, i)\vary{i} \geq 1 \mbox{ for all }T \in 
\stratspacec$,\\[3pt]
\mbox{\ \ \ \ }(2) $\vary{i} \geq 0 \mbox{ for all }i \in \{1, \ldots, n\}$.\\
\vspace{6pt}
\hrule
\vspace{10pt}

\begin{thm}
\label{thm:smt}
There is a polynomial-time algorithm, based on the ellipsoid method, for
solving the \smt.
\end{thm}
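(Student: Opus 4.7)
The plan is to apply the generic reduction from \mincost{} to \maxthru{} used by Deshpande and Hellerstein for \oneofn{} testing~\cite{DBLP:journals/talg/DeshpandeH12}, specializing it to standard \kofn{} testing via the dual LP displayed above (with \stratspacec{} replaced by \stratspace{}). The dual has only $n$ variables but exponentially many constraints, one per $T \in \stratspace$, so the only obstacle to running the ellipsoid method on it is producing a polynomial-time separation oracle.

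The crucial observation is that a separation oracle for this dual is exactly an algorithm for the \mincost{} problem for standard \kofn{} testing. Given a candidate $y \geq 0$, the constraint indexed by a strategy $T$ reads $\sum_{i=1}^n \landprob{T}{i}\, \vary{i} \geq 1$. Interpreting $\vary{i}$ as the cost $\cost{i}$ of test $i$, the quantity $\sum_i \landprob{T}{i}\,\vary{i}$ is precisely the expected cost of evaluating a random item $\anitem \sim D_p$ using strategy $T$. Hence finding a most-violated constraint amounts to finding a strategy $T^\ast$ minimizing this expected cost and checking whether the minimum is strictly less than $1$; if it is, the row $(\landprob{T^\ast}{1},\ldots,\landprob{T^\ast}{n})$ is a separating hyperplane, and otherwise $y$ is dual-feasible.

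The steps I would carry out are: (i) instantiate the separation oracle by invoking the polynomial-time \mincost{} algorithm of Salloum--Breuer--Ben-Dov (or Chang et al.)~\cite{salloumphd,salloumbreuer,bendov81,journals/tc/ChangSF90} with costs $\cost{i} := \vary{i}$; (ii) compute from its compact output (two permutations, one sorted by $\cost{i}/(1-\pr{i})$ and one by $\cost{i}/\pr{i}$) the probabilities $\landprob{T^\ast}{i}$ that test $i$ is performed under $T^\ast$, by a direct probability sweep of the two orders, noting that these values are rational with polynomial bit-length; (iii) feed this oracle to the Gr\"otschel--Lov\'asz--Schrijver ellipsoid framework, which returns an optimal dual vector $y^\ast$ together with a polynomially sized list $\stratspace_{0} \subseteq \stratspace$ of strategies whose constraints suffice to certify optimality; (iv) restrict the primal \maxthru{} LP to the variables $\{\varz{T} : T \in \stratspace_{0}\}$, giving a polynomial-size LP whose optimum equals that of the full primal by LP duality, and solve it with any polynomial-time LP algorithm to extract an optimal routing.

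The main obstacle I anticipate is the bookkeeping at the interface between the compact \mincost{} output and the ellipsoid method: one must check that $\landprob{T^\ast}{i}$ can be written down exactly in polynomial time from the two-permutation description of $T^\ast$ and that the usual polynomial-precision analysis of the ellipsoid method goes through with these rational coefficients. Both are routine but nontrivial technical points that must be verified in order to back up an honest ``polynomial time'' claim; everything beyond them is a standard application of the ellipsoid framework, and the same template would yield an analogous algorithm for the \cmt{} by using the conservative \mincost{} algorithm of Section~\ref{sec:mincostdef} as the separation oracle instead.
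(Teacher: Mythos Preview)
Your proposal is correct and follows essentially the same approach as the paper: solve the dual with the ellipsoid method, using a \mincost{} algorithm for standard \kofn{} testing as the separation oracle (since $\sum_i \landprob{T}{i}\vary i$ is precisely expected cost under costs $\vary i$), then restrict the primal to the polynomially many strategies returned as violated constraints and solve it directly. The paper singles out the Chang et al.\ algorithm specifically because it can be made to output the $\landprob{T}{i}$ values, which is exactly the bookkeeping point you flag as the main obstacle; your added remarks on bit-length and the parallel \cmt{} application are consistent with, and slightly more explicit than, what the paper states.
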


\begin{proof} 
The  approach of Deshpande and Hellerstein works as follows.  The input consists of
the $\pr i$ and the $\rate i$, and the goal is to solve the
\maxthru\ LP in time polynomial in \valn.
The number of variables of the \maxthru\ LP is not
polynomial, so the LP cannot be 
solved directly.
Instead, the idea is to solve it by first using 
the ellipsoid method to solve the dual LP.
The ellipsoid method is
run using an algorithm that simulates a separation oracle
for the dual in time polynomial in \valn.
During the running of the ellipsoid method, the violated
constraints returned by the separation oracle are saved
in a set \loadratio.  Each constraint of the dual corresponds
to an ordering \strategy.  When the ellipsoid method terminates,
a modified version of the \maxthru\ LP is generated,
which includes only the variables $\varz{\strategy}$ corresponding to
orderings \strategy\ in \loadratio\ (i.e. the other variables $\varz{\strategy}$ are
set to 0).  This modified version can then be solved 
directly using a polynomial-time LP algorithm.
The resulting solution is an optimal
solution for the original \maxthru\ LP.

The above approach requires 
a polynomial-time algorithm for simulating the separation oracle
for the dual.  
Deshpande and Hellerstein's method for simulating the
separation oracle relies on the following observations.
In the dual LP for the
\maxthru\, \oneofn\ testing problem,
there are $n!$ constraints corresponding to the
$n!$ permutations of the {\processor}s.
The constraint for permutation $\perm$
is
$\sum_{i=1}^n \landprob{\strategy_1(\pi)}{i}\vary{i} \leq 1$.
If one views $y$ as a vector of costs, where the cost of $i$
is $\vary{i}$, then 
$\sum_{i=1}^n \landprob{\strategy}{i}\vary{i}$  is the expected cost of testing
an item \anitem{} using ordering \strategy.
Thus one can determine the ordering \strategy\ that minimizes
$\sum_{i=1}^n \landprob{\strategy}{i}\vary{i}$
by solving the \mincost\ problem with probabilities
$\pr 1, \ldots, \pr{\valn}$ and cost vector $y$.
(Liu et al.'s approximation algorithm for generic
\maxthru\ also relies on this observation~\cite{conf/pods/LiuPRY08}.)

If the \mincost\ ordering \strategy\ has expected cost
less than 1, then the constraint it corresponds to is
violated.  Otherwise, since the right hand side of
each constraint is 1, $y$ obeys all constraints.
Thus simulating the separation oracle for
the dual on input $y$ can be done by
first running the \mincost\ algorithm (with probabilities $\pr i$ and costs
$\vary{i}$) to find a \mincost\ ordering \strategy.  Once \strategy\ is found,
the values of the coefficients $\landprob{\strategy}{i}$ are calculated.
These are used to calculate
$\sum_{i=1}^n \landprob{\strategy}{i}$, the expected cost of \strategy.  If this value
is less than 1, then the constraint
$\sum_{i=1}^n \landprob{\strategy}{i}$ is returned.

To apply the above approach to \maxthru\ for standard \kofn{} testing,
we observe that in the dual LP for this problem,
there is a constraint, 
$\sum_{i=1}^n \landprob{\strategy}{i}\vary{i} \leq 1$,
for every possible strategy $\strategy$,
We can simulate a separation
oracle for the dual on input $y$ by
running a \mincost\ algorithm for standard \kofn{} testing.
We also need to be able to compute the $\landprob{\strategy}{i}$ values for the
strategy output by that algorithm.
The algorithm of 
Chang et al.~for the \mincost\ standard \kofn{} testing problem is suitable for this purpose,
as it can easily be modified to output
the $\landprob{\strategy}{i}$ values associated with its output strategy \strategy~\cite{journals/tc/ChangSF90}. 
\end{proof}

\bibliographystyle{plain}
\bibliography{kofnonly}

\end{document}